\DeclareMathOperator{\E}{\mathbb{E}}
\DeclareMathOperator*{\argmin}{arg\,min}
\DeclareMathOperator*{\conv}{conv}
\newcommand{\Xcal}{\mathcal{X}}
\newcommand{\Ycal}{\mathcal{Y}}
\newcommand{\Zcal}{\mathcal{Z}}
\def\P{{\mathbb{P}}}
\newcommand{\TCI}{\widetilde{CI}}
\newcommand{\TSI}{\widetilde{SI}}
\newcommand{\TUI}{\widetilde{UI}}
\newcommand{\bracketed}[1]{\left[\vphantom{\let\\=\relax #1}\right. #1 \left.\vphantom{\let\\=\relax #1}\right]}
\newcommand{\ptrain}{\hat{p}_{\textrm{data}}}
\newcommand{\KL}{D_{\mathrm{KL}}}
\theoremstyle{definition}
\newtheorem{theorem}{Theorem}
\newtheorem{lemma}[theorem]{Lemma}
\newtheorem{proposition}[theorem]{Proposition}
\newtheorem{definition}[theorem]{Definition}
\newtheorem{example}[theorem]{Example}
\def\BibTeX{{\rm B\kern-.05em{\sc i\kern-.025em b}\kern-.08em
		T\kern-.1667em\lower.7ex\hbox{E}\kern-.125emX}}
\begin{document}
	
\setlength{\abovedisplayskip}{4pt}
\setlength{\belowdisplayskip}{4pt}

\title{The Variational Deficiency Bottleneck\\
\thanks{This project has received funding from the European Research Council (ERC) under the European Union's Horizon 2020 research and innovation programme (grant agreement n\textsuperscript{o} 757983).}
}

\author{%
	\IEEEauthorblockN{Pradeep Kr.~Banerjee}
	\IEEEauthorblockA{MPI MiS\\ 
		Email: pradeep@mis.mpg.de}
	\and
	\IEEEauthorblockN{Guido Mont{\'u}far}
	\IEEEauthorblockA{UCLA and MPI MiS\\ 
		Email: montufar@math.ucla.edu}
}

\maketitle

\begin{abstract}
We introduce a bottleneck method for learning data representations based on information deficiency, rather than the more traditional information sufficiency. A variational upper bound allows us to implement this method efficiently. The bound itself is bounded above by the variational information bottleneck objective, and the two methods coincide in the regime of single-shot Monte Carlo approximations. The notion of deficiency provides a principled way of approximating complicated channels by relatively simpler ones. We show that the deficiency of one channel with respect to another has an operational interpretation in terms of the optimal risk gap of decision problems, capturing classification as a special case. Experiments demonstrate that the deficiency bottleneck can provide advantages in terms of minimal sufficiency as measured by information bottleneck curves, while retaining 
robust test performance in classification tasks.
\end{abstract}

\begin{IEEEkeywords}
Blackwell sufficiency, deficiency, information bottleneck, synergy, robustness
\end{IEEEkeywords}

\vspace{.3cm}
\section{Introduction}
The information bottleneck (IB) is an approach to learning data representations based on a notion of minimal sufficiency. The general idea is to map an input source to an intermediate representation that retains as little information as possible about the input (\emph{minimality}), but preserves as much information as possible in relation to a target variable of interest (\emph{sufficiency}). See Fig.~\ref{figparadig}. For example, in a classification problem, the target variable could be the class label of the input data. In a reconstruction problem, the target variable could be a denoised reconstruction of the input. Intuitively, a representation which is minimal in relation to a given task, will discard nuisances in the inputs that are irrelevant to the task, and hence distill more meaningful information and allow for a better generalization. 
The IB methods~\cite{witsenhausen1975,InformationBottleneck,Harremoes2007,GIB2018} have found numerous applications in representation learning, clustering, classification, generative modeling, model selection and analysis in deep neural networks, among others (see, e.g., \cite{Tishbylearning,betaVAE,alemiVAE,tishbydeep,tishbyopen,zivPolyanskiIB}). 

In the traditional IB paradigm, minimality and sufficiency are measured in terms of the mutual information. Computing the mutual information can be challenging in practice. 
Recent works have formulated more tractable functions by way of variational bounds on the mutual information~\cite{chalk2016,achilledropout,alemi2016variational,artemyNIB}. 
Instead of maximizing the sufficiency term of the IB, we formulate a new bottleneck method that minimizes deficiency. Deficiencies provide a principled way of approximating complex channels by relatively simpler ones and have a rich heritage in the theory of comparison of statistical experiments~\cite{Blackwell1953,lecam,torgersen}. From this angle, the formalism of deficiencies has been used to obtain bounds on optimal risk gaps of statistical decision problems. As we show, the deficiency bottleneck minimizes a regularized risk gap. Moreover, the proposed method has an immediate variational formulation that can be easily implemented as a modification of the variational information bottleneck (VIB)~\cite{alemi2016variational}. In fact, both methods coincide in the limit of single-shot Monte Carlo approximations. We call our method the variational deficiency bottleneck (VDB). 

Experiments on basic data sets show that the VDB is able to obtain more compressed representations than the VIB while retaining the same level of sufficiency.
Training with the VDB also improves out-of-distribution robustness over the VIB as we demonstrate on two benchmark datasets, the \textsc{MNIST-C} \cite{mnistc} and the \textsc{CIFAR-10-C} \cite{augmixICLR2019}.

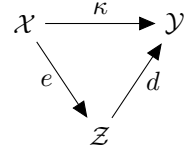
\begin{figure}[t!]
	\label{fig:VDB} 
	\centering
	\begin{tikzpicture}[baseline=(current bounding box.center),main node/.style={circle,draw}]
	\node	(1) at	(0,0)  {$\Xcal$}; 
	\node[]	(2)	 at	(2,0) {$\Ycal$};  
	\node	(3)	 at	(1,-1.5) {$\Zcal$};
	\draw[->,above,arrows={-triangle 45}] (1) to node {$\kappa$} (2);
	\draw[->, left,arrows={-triangle 45}] (1) to node  {$e$} (3);
	\draw[->,below, right,arrows={-triangle 45}] (3) to node [xshift=0pt,yshift=0pt] {$d$} (2);
	\end{tikzpicture}
	\caption{The bottleneck paradigm: The general idea of a bottleneck method is to first map an input~$X\in\Xcal$ to an intermediate representation~$Z\in\Zcal$, and then map~$Z$ to an output~$Y\in\Ycal$. We call the mappings, resp., an encoder ($e$) and a decoder ($d$). In general, the true channel~$\kappa$ is unknown, and only accessible through a set of training examples. We would like to obtain an approximation of~$\kappa$. 	 
	}
	\label{figparadig}
\end{figure}

We describe the details of our method in Section~\ref{sec:method}. We elaborate on the theory of deficiencies in Section~\ref{sec:VDB_theory}. Experimental results with the VDB are presented in Section~\ref{sec:experiments}. We use notation that is standard in information theory~\cite{CoverThomas91:Elements_of_Information_Theory}. 

\vspace{.3cm}
\section{The variational deficiency bottleneck} 
\label{sec:method}
Let~$X$ denote an observation or \emph{input} variable and~$Y$ an \emph{output} variable of interest and let $\Xcal$, $\Ycal$ denote, resp., the space of possible inputs and outputs. Let $p(x,y)=\pi(x)\kappa(y|x)$ be the true joint distribution, where the conditional distribution or \emph{channel}~$\kappa(y|x)$ describes how the output depends on the input. We consider the situation where the true channel is unknown, but we are given a set of~$N$ independent and identically distributed (i.i.d.) samples $(x^{(i)},y^{(i)})_{i=1}^N$ from~$p$. Our goal is to use this data to learn a more structured version of the channel~$\kappa$, by first ``compressing'' the input~$X$ to an intermediate \emph{representation} variable~$Z$ and subsequently mapping the representation back to the output $Y$. The presence of an intermediate representation can be regarded as a bottleneck, a model selection problem, or as a regularization strategy. 

We define an encoder and a decoder model using two parameterized families of channels $e(z|x)$ and $d(y|z)$, respectively. The encoder-decoder pair induces a model~$\widehat{\kappa}(y|x)=\int d(y|z)e(z|x)\,dz$. Equivalently, we write~$\widehat{\kappa}=d\circ e$. Given a representation, we want the decoder to be as powerful as the original channel~$\kappa$ in terms of ability to recover the output. The deficiency of a decoder~$d$ w.r.t.~$\kappa$ quantifies the extent to which any pre-processing at the input (by way of randomized encodings) fails to approximate~$\kappa$. 
Let $\mathsf{M}(\Xcal;\Ycal)$ denote the space of all channels from $\Xcal$ to~$\Ycal$. We define the deficiency of $d$ w.r.t. $\kappa$ as follows:
\begin{definition}[Deficiency] \label{def:deficiency}
	Given a channel~$\kappa\in\mathsf{M}(\Xcal;\Ycal)$ from~$\Xcal$ to~$\Ycal$, and a decoder $d\in\mathsf{M}(\Zcal;\Ycal)$ from some~$\Zcal$ to~$\Ycal$, the \emph{deficiency of $d$ w.r.t.~$\kappa$} is defined as
	\begin{align}
	\delta^\pi(d,\kappa) = \min_{e  \in \mathsf{M}(\Xcal; \Zcal)} \KL(\kappa \| d\circ e|\pi). 
	\end{align}
\end{definition}
Here~$\KL(\cdot\|\cdot|\cdot)$ is the conditional KL divergence~\cite{CoverThomas91:Elements_of_Information_Theory}, and $\pi$ is an input distribution over $\Xcal$. 
The definition is similar in spirit to Lucien Le Cam's notion of weighted deficiencies of one channel w.r.t.~another~\cite{lecam}, \cite[Section 6.2]{torgersen} and its recent generalizations \cite{raginsky2011}. 

We propose to train the model by minimizing the deficiency of $d$~w.r.t.~$\kappa$ subject to a regularization that limits the \emph{rate}~$I(Z;X)$, i.e., the mutual information between the representation and the raw inputs. We call our method the \emph{deficiency bottleneck} (DB). The DB minimizes the following objective over all tuples~$(e\in\mathsf{M}(\Xcal;\Zcal),d\in\mathsf{M}(\Zcal;\Ycal))$: 
\begin{align}\label{eq:DB-obj}
\mathcal{L}_{DB}^{\beta}(e,d) := \delta^\pi(d,\kappa) + \beta I(Z;X).  
\end{align}
The parameter $\beta\geq0$ allows us to adjust the level of regularization. 

For any distribution~$r(z)$, the rate term admits a simple variational upper bound \cite{CoverThomas91:Elements_of_Information_Theory}: 
\begin{align}\label{eq:varboundMI}
I(Z;X) \le \int p(x,z)\log{\tfrac{e(z|x)}{r(z)}}\mathop{dx}\mathop{dz}.
\end{align}

Let~$\ptrain$ be the empirical distribution of the data (input-output pairs). By noting that~$\delta^\pi(d,\kappa)\leq \KL(\kappa\| d\circ  e|\pi)$ for any $e\in\mathsf{M}(\Xcal;\Zcal)$, and ignoring data-dependent constants, we obtain the following optimization objective which we call the \emph{variational deficiency bottleneck} (VDB) objective:
\begin{multline}
\mathcal{L}_{VDB}^{\beta}(e,d) := \mathbb{E}_{(x,y)\sim \ptrain} \bracketed{-\log \int d(y|z)e(z|x)dz \\ + \beta \KL(e(Z|x)\| r(Z)) }. \label{eq:VDB-obj}
\end{multline} 

The computation is simplified by defining~$r(z)$ to be a standard multivariate Gaussian distribution~$\mathcal{N}(0,I)$, and using an encoder of the form~$e(z|x) = \mathcal{N}(z|f_\phi(x))$, where $f_\phi$ is a neural network that outputs the parameters of a Gaussian distribution. Using the reparameterization trick~\cite{VAEkingma,VAErezende}, we then write~$e(z|x) dz = p(\epsilon)d\epsilon$, where $z=f(x,\varepsilon)$ is a function of $x$ and the realization $\epsilon$ of a standard normal distribution. This allows us to do stochastic backpropagation through a single sample~$z$. The KL term in~\eqref{eq:VDB-obj} admits an analytic expression when $r(z)$ and the encoder are Gaussian. We train the model 
by minimizing the following empirical objective over all tuples~$(e\in\mathsf{M}(\Xcal;\Zcal),d\in\mathsf{M}(\Zcal;\Ycal))$: 
\begin{multline}
\frac{1}{N} \sum_{i=1}^N  \bracketed{-\log( \frac{1}{M} \sum_{j=1}^M [ d(y^{(i)}|f(x^{(i)},\epsilon^{(j)}))])\\ + \beta \KL(e(Z|x^{(i)})\|r(Z))}. \label{eq:emp-VDB-obj}
\end{multline}
For training, we choose a mini-batch size of~$N=100$. For estimating the expectation inside the log, we use~$M=1,3,6,12$ Monte Carlo samples from the encoding distribution. 

We note that the Variational Information Bottleneck (VIB)~\cite{alemi2016variational} leads to a similar-looking objective function, with the only difference that the sum over~$j$ in \eqref{eq:emp-VDB-obj} is \emph{outside} of the logarithm. By Jensen's inequality, the VIB loss is an upper bound to our loss. If one uses a single sample from the encoding distribution (i.e., $M=1$), the VDB and the VIB empirical objective functions coincide. For a large enough mini-batch size, e.g., $N=100$, taking $M = 1$ is sufficient to estimate the VIB objective \cite{alemi2016variational}. This is the standard setting as presented in \cite{alemi2016variational} that we want to compare with. In the case of the VDB, on the other hand, the mini-batch size $N$ and $M$ are \emph{not} exchangeable, since the expectation is inside the log function.

To better understand the behavior of the VDB optimization \eqref{eq:emp-VDB-obj}, we adopt two training strategies: 
\begin{itemize}
	\item a \emph{oneshot} strategy where the encoder and decoder networks are updated simultaneously, and
	\item a \emph{sequential} strategy where the encoder network is updated for $k$ steps before alternating to one decoder update. We choose $k=5,10,20$. 
\end{itemize}
The idea of using the sequential strategy is to better approximate the deficiency which involves an optimization over the encoder (see Definition~\ref{def:deficiency}).

\vspace{.3cm}
\section{Two perspectives on the deficiency bottleneck} \label{sec:VDB_theory}
In this section, we present two different perspectives on the deficiency bottleneck, namely, a decision-theoretic perspective and an information decomposition perspective.  

In Section~\ref{sec:defriskgap}, we 
review the notions of information sufficiency and deficiency through the lens of Blackwell-Le Cam decision theory \cite{Blackwell1953,lecam,torgersen}. We formulate the learning task as a decision problem and give an operational characterization of the deficiency $\delta^{\pi}(d,{\kappa})$ as the gap in the expected losses of optimal decision rules when using the channel $d$ rather than~$\kappa$. 

In Section~\ref{sec:UIB}, we review the classical IB and our DB objective through the lens of nonnegative mutual information decompositions \cite{e16042161,UIdefAllerton,HarderSalgePolani2013:Bivariate_redundancy}. This leads us to a new interpretation of the IB as a \emph{Unique Information Bottleneck} and also sheds light on the difference between the IB and DB formulations.

\subsection{A decision-theoretic perspective} \label{sec:defriskgap}
\subsubsection{Blackwell sufficiency and channel deficiency} 
In a seminal paper \cite{Blackwell1953}, David Blackwell asked the following question: Suppose that a learner has a finite set of possible actions and she wishes to make an optimal decision 
to minimize a loss depending on the value of some random variable~$Y$ and her chosen action. 
If the learner cannot observe $Y$ directly before choosing her action and has to pick between two channels with the \emph{common input} $Y$, which one should she prefer? 
Blackwell introduced an ordering that compares channels by the minimal expected loss or risk that a learner incurs when her decisions are based on the channel outputs. He then showed that such an ordering can be equivalently characterized in terms of a purely probabilistic relation between the channels: The learner will always prefer one channel over another if and only if the latter is an \emph{output-degraded} version of the former, in the sense that she can simulate a single use of the latter by randomizing at the output of the former. 

Very recently, Nasser \cite{nasser2017} asked the same question, only now the learner has to choose between two channels with a \emph{common output} alphabet. 
Nasser introduced the \emph{input-degraded} ordering and gave a characterization of input-degradedness that is similar to Blackwell's ordering \cite{Blackwell1953}. 
\begin{definition}[Blackwell sufficiency]\label{def:preorder}
	Given two channels, $\kappa\in\mathsf{M}(\Xcal; \Ycal)$ and~$d\in\mathsf{M}(\Zcal; \Ycal)$, ${\kappa}$ is \emph{input-degraded} from~${d}$, denoted ${d}\succeq_{\Ycal} {\kappa}$, if~${\kappa}={d}\circ {e}$ for some~${e}\in\mathsf{M}(\Xcal;\Zcal)$. 
	We say that ${d}$ is \emph{input Blackwell sufficient} for~$\kappa$ if~${d}\succeq_{\Ycal} {\kappa}$. 
\end{definition}
Stated in another way, $d$ is input Blackwell sufficient for $\kappa$ if ${d}$ can be reduced to~$\kappa$ by applying a randomization $e$ at its input so that $d\circ e=\kappa$. Blackwell sufficiency induces only a preorder on the set of all channels with a common output alphabet. In practice, most channels are uncomparable, i.e., one cannot be reduced to another by a randomization. When such is the case, the deficiency quantifies how far the true channel~$\kappa$ is from being a randomization (by way of any input encoding $e$) of the decoder~$d$.

\subsubsection{Deficiency as an optimal risk gap} 
We formulate a learning task as a decision problem and show that the deficiency quantifies the gap in the optimal risks 
when using the channel $d$ rather than~$\kappa$. 

Let $\P_Y$ be the set of all distributions on $\Ycal$. 
In the following, we assume that $\Xcal$ and $\Ycal$ are finite. 
For every $x \in \Xcal$, define $\kappa_x \in \P_Y$ as $\kappa_x(y)=\kappa(y|x), \text{ } \forall y\in\Ycal$. 
Consider the following decision problem between a learner and Nature: 
Nature draws~$x\sim \pi$ and $y\sim \kappa_x$. The learner observes~$x$ and proposes a distribution~$q_x\in\mathbb{P}_{\Ycal}$ that expresses her uncertainty about the true value~$y$. The quality of a prediction~$q_x$ in relation to~$y$ is measured by the \emph{log-loss} function~$\varsigma(y,q_x) :=-\log{q_x(y)}$. 
The log-loss is an instance of a ``strictly proper'' loss function that enjoys nice properties such as the uniqueness of the optimum; 
see,  e.g.,~\cite{gneiting}. 

Ideally, the prediction~$q_x$ should be as close as possible to the true conditional distribution~$\kappa_x$. This is achieved by minimizing the expected loss~$L(\kappa_x,q_x):=\E_{y\sim \kappa_x}\varsigma(y,q_x)$, for all~$x\in\Xcal$. 
Define the \emph{Bayes act} against~$\kappa_x$ as the optimal prediction 
$q^{\ast}_x:= \argmin_{q_x\in\mathbb{P}_{\Ycal}} L(\kappa_x,q_x)$,
and the \emph{Bayes risk} for the distribution~$P_{XY}=\pi\times\kappa$ as $R(P_{XY},\varsigma):=\E_{x\sim\pi}L(\kappa_x,{q^{\ast}_x})$. 
For the log-loss, the Bayes act is~$q^{\ast}_x=\kappa_x$ and hence, the Bayes risk is
\begin{align*}
R(P_{XY},\varsigma)
=\E_{x\sim\pi} \E_{y\sim \kappa_x}\big[-\log{\kappa_x(y)}\big]=H(Y|X).
\end{align*}

Given a channel $d\in\mathsf{M}(\Zcal; \Ycal)$,  
we want a representation $z\in\Zcal$ of $x$ (output by some encoder), so that the outputs of $d$ match those of the true channel $\kappa$. 
Let~$C=\conv(\{d_z:\;z\in\Zcal\})\subset\mathbb{P}_{\Ycal}$ be the convex hull of the points $\{d_z\}_{z\in\Zcal}\in\mathbb{P}_{\Ycal}$. 
The Bayes act against $\kappa_x$ is
$q^{\ast}_{x_d}:=\argmin_{q_x\in C} \E_{y\sim \kappa_x}\big[-\log{q_x(y)}\big]$.
$q^{\ast}_{x_d}$ is the $rI$-projection of~$\kappa_x$ to the convex set $C\subset\mathbb{P}_{\Ycal}$ \cite{csiszarIproj}. 
Such a projection exists but is not necessarily unique. If non-unique, we arbitrarily select one of the minimizers as the Bayes act. The associated Bayes risk is 
\begin{align*}
R_d(P_{XY},\varsigma):=\E_{x\sim\pi} \E_{y\sim \kappa_x}\big[-\log{q^{\ast}_{x_d}(y)}\big]. 
\end{align*}

The next Proposition~\ref{prop:defriskgap} states that the gap in the Bayes risks, $\Delta R:=R_d(P_{XY},\varsigma)-R(P_{XY},\varsigma)$,  when making a decision based 
on $Z$ vs.\ $X$ is just the deficiency. 
\begin{proposition}[Deficiency quantifies the optimal risk gap for the log-loss]
	\label{prop:defriskgap}
	$\delta^{\pi}(d,\kappa)=\Delta R$. 
\end{proposition}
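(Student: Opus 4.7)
The plan is to expand the Bayes risk using the standard identity that for log-loss the expected loss decomposes into an entropy term plus a KL divergence, and then recognize that the minimization over predictions $q_x \in C$ is, up to reparameterization, exactly the inner minimization in the definition of deficiency.

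First I would note that for any $x \in \Xcal$ and any distribution $q_x \in \P_Y$,
\begin{equation*}
\E_{y\sim\kappa_x}\bigl[-\log q_x(y)\bigr] \;=\; H(\kappa_x) \,+\, \KL(\kappa_x\|q_x),
\end{equation*}
so plugging this into the definition of $R_d(P_{XY},\varsigma)$ and taking the expectation over $x\sim\pi$ gives
\begin{equation*}
R_d(P_{XY},\varsigma) \;=\; H(Y|X) + \E_{x\sim\pi}\Bigl[\,\min_{q_x \in C}\KL(\kappa_x\|q_x)\,\Bigr],
\end{equation*}
using that $R(P_{XY},\varsigma)=H(Y|X)$. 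Hence
\begin{equation*}
\Delta R \;=\; \E_{x\sim\pi}\Bigl[\,\min_{q_x \in C}\KL(\kappa_x\|q_x)\,\Bigr].
\end{equation*}

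Second, I would identify the set $C = \conv(\{d_z : z\in\Zcal\})$ with the image of the map $e(\cdot|x) \mapsto (d\circ e)_x = \sum_z e(z|x)\,d_z$. Since $e(\cdot|x)$ ranges over all probability distributions on $\Zcal$, this image is precisely $C$. Therefore, for each fixed $x$,
\begin{equation*}
\min_{q_x\in C}\KL(\kappa_x\|q_x) \;=\; \min_{e(\cdot|x)\in\P_{\Zcal}}\KL\bigl(\kappa_x\,\big\|\,(d\circ e)_x\bigr).
\end{equation*}

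Third, I would use the decoupling observation that, because an encoder $e\in\mathsf{M}(\Xcal;\Zcal)$ is a collection $\{e(\cdot|x)\}_{x\in\Xcal}$ of conditional distributions with no coupling between different rows, the joint minimization over $e$ interchanges with the outer expectation:
\begin{equation*}
\min_{e\in\mathsf{M}(\Xcal;\Zcal)} \E_{x\sim\pi}\Bigl[\KL\bigl(\kappa_x\|(d\circ e)_x\bigr)\Bigr]
\;=\; \E_{x\sim\pi}\Bigl[\min_{e(\cdot|x)}\KL\bigl(\kappa_x\|(d\circ e)_x\bigr)\Bigr].
\end{equation*}
The left-hand side is $\delta^\pi(d,\kappa)$ by Definition~\ref{def:deficiency} (unwinding the conditional KL), while the right-hand side equals $\Delta R$ by the two preceding steps. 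Chaining the identities proves $\delta^\pi(d,\kappa)=\Delta R$.

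The argument is essentially a chain of standard rewrites, so the only mildly subtle point is the decoupling step: I would want to be explicit that the argmin $q^{\ast}_{x_d}$ for each $x$ can be realized as $(d\circ e^{\ast})_x$ for some global encoder $e^{\ast}$, which is immediate since we can simply set $e^{\ast}(\cdot|x)$ to any convex representation of $q^{\ast}_{x_d}$ as a mixture of the rows $d_z$. Finiteness of $\Xcal$ and $\Ycal$ ensures there are no measurability issues in constructing $e^{\ast}$.
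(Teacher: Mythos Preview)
Your proposal is correct and follows essentially the same route as the paper's proof: the paper also writes $\Delta R = \sum_x \pi(x)\min_{q_x\in C}\KL(\kappa_x\|q_x)$, then swaps the pointwise minimum over $q_x\in C$ for a global minimum over encoders $e\in\mathsf{M}(\Xcal;\Zcal)$ and identifies the result with $\delta^{\pi}(d,\kappa)$. Your write-up simply makes explicit the entropy-plus-KL decomposition and the decoupling argument that the paper leaves implicit in its chain of equalities.
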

\begin{proof}
The proof follows from noting that
\begin{align*} 
\Delta R&=\sum_{x\in\Xcal} \pi(x)\min_{q_x\in C\subset\mathbb{P}_{\Ycal}} \KL({\kappa}_x\|q_x)\\&=\min_{{e}\in\mathsf{M}(\Xcal;\Zcal)} \sum_{x\in\Xcal} \pi(x) \KL({\kappa}_x\|d\circ e_x)\notag\\&=\min_{e  \in \mathsf{M}(\Xcal; \Zcal)} \KL(\kappa \| d\circ e|\pi)=\delta^{\pi}(d,{\kappa}).
\end{align*} 
\end{proof}

\subsection{An information decomposition perspective} \label{sec:UIB}
\subsubsection{IB as Unique Information Bottleneck} \label{ssec:UIB}
A quantity that is similar in spirit to the deficiency is the Unique Information ($UI$) \cite{e16042161}:
\begin{definition}[Unique information]
	Let $(Y,X,Z)\sim P$. The unique information that $X$ conveys about $Y$ w.r.t.\ $Z$ is 
	\begin{align}
	UI(Y;X\backslash Z) := \min_{Q \in \Delta_{P}} I_Q(Y;X|Z), 
	\label{subeq:UIy}
	\end{align} 
	where the subscript $Q$ 
	denotes the joint distribution on which the mutual information is evaluated, and 
	\begin{align}
	\Delta_{P} := \big\{Q \in \mathbb{P}_{\Ycal\times\Xcal\times\Zcal}\colon  &Q_{YX}(y,x)=P_{YX}(y,x),\notag 
	\\&Q_{YZ}(y,z)=P_{YZ}(y,z)\big\}
	\label{eq:delP}
	\end{align}
	is the set of joint distributions of $(Y,X,Z)$ that have the same marginals on $(Y,X)$ and $(Y,Z)$ as $P$. 
\end{definition}

While the deficiency quantifies a deviation from the \emph{input-degraded} order, the $UI$ quantifies a deviation from the \emph{output-degraded} order \cite{UIdefAllerton}.
Note, however, that the vanishing sets of $\delta^\pi(d,\kappa)$ and $UI(Y;X\backslash Z)$ are not equivalent as the next example shows. 
\begin{example}\label{ex:counterexample}
	Let $\Ycal=\{0,1,\mathtt{e}\}$, and $\Xcal=\Zcal=\{0,1,\mathtt{e}\}$. Let $P=P_Y\times P_{X|Y} \times P_{Z|X}$ where $P_Y\sim\text{Bernoulli}(\tfrac{1}{2})$ and $P_{X|Y}$ and $P_{Z|X}$ are symmetric erasure channels with erasure probabilities $\tfrac{1}{6}$ and $\tfrac{1}{5}$, resp. 
	Recall that a symmetric erasure channel from $\Ycal$ to $\Xcal$ with erasure probability $\epsilon\in[0,1]$ has transition probabilities: $P_{X|Y}(\mathtt{e}|0)=P_{X|Y}(\mathtt{e}|1)=\epsilon$, $P_{X|Y}(0|0)=P_{X|Y}(1|1)=1-\epsilon$.	For the distribution $P$, we have $UI(Y;X\backslash Z)=I(Y;X|Z)=\tfrac{1}{6}>0$.  
	On the other hand, the induced ``reverse'' erasure channels $P_{Y|X}=\kappa$ and $P_{Y|Z}=d$ are identical. Thus, $\delta^{\pi}(d,\kappa)=0$.
\end{example}
In \cite{e16042161},  
the value $UI(Y; X \backslash Z)$ is interpreted as the information about~$Y$ that is known to $X$ but unknown to~$Z$. 
This interpretation is motivated by Blackwell's result \cite{Blackwell1953}: whenever $UI(Y; X \backslash Z)>0$, there exists a decision problem in which it is better to know $X$ than to know $Z$. 
Moreover, this induces a decomposition of the mutual information between $Y$ and $(X,Z)$ into four terms:
\begin{align}\label{eq:MIdec1}
I(Y; XZ) &=UI(Y;X \backslash Z)+SI(Y; X,Z)\notag\\&+UI(Y;Z \backslash X)+CI(Y; X,Z).
\end{align}
%
The quantity 
\begin{equation}
\label{eq:MIdec2}
SI(Y;X,Z) := I(Y;X) - UI(Y;X\backslash Z)
\end{equation}
is interpreted as \emph{shared} or \emph{redundant} information, i.e., information about~$Y$ that is known in common to both $X$ and $Z$, and the quantity
\begin{equation}
\label{eq:MIdec3}
CI(Y;X,Z) := I(Y;X|Z) - UI(Y;X\backslash Z)
\end{equation}
is interpreted as \emph{complementary} or \emph{synergistic} information, i.e., the information about $Y$ that materializes only when $X$ and $Z$ act jointly.

\begin{example}
	If $X$ and~$Z$ are independent binary random variables, and $Y=\textsc{Xor}(X,Z)$, then $CI(Y;X,Z)=1$, while $SI(Y;X,Z) = UI(Y;X\backslash Z) = UI(Y;Z\backslash X)=0$. This is an instance of a purely \emph{synergistic} interaction. 
	
	If $Y$, $X$, $Z$ are uniformly distributed binary random variables with $Y = X = Z$, then $SI(Y;X,Z)=1$, while $CI(Y;X,Z) = UI(Y;X\backslash Z) = UI(Y;Z\backslash X)=0$.
	This is an instance of a purely \emph{redundant} interaction.  
	
	If $X$, $Z$ are independent binary random variables, and $Y = (X, Z)$, then $UI(Y;X\backslash Z) = UI(Y;Z\backslash X)=1$, while $SI(Y;X,Z)=CI(Y;X,Z)=0$. This is an instance of a purely \emph{unique} interaction. 	
	
\end{example}
For some probability distributions $P$ with special structure, the decomposition \eqref{eq:MIdec3} can be computed analytically \cite{e16042161}. 
\begin{lemma}\label{lem:UI_CMI}
	Let $Q^0:=P_{YX}\times e_{Z|X}$ for some $e\in\mathsf{M}(\Xcal;\Zcal)$. Then $UI_{Q^0}(Y;X\backslash Z) = I_{Q^0}(Y;X|Z)$, $SI_{Q^0}(Y;X,Z)=I(Y;Z)$ and $UI_{Q^0}(Y;Z\backslash X)=CI_{Q^0}(Y;X,Z)=0$.
\end{lemma}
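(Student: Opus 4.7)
The plan is to exploit the Markov structure $Y - X - Z$ that $Q^0 = P_{YX} \times e_{Z|X}$ enjoys: since $Z$ is drawn conditionally on $X$ alone, we have $I_{Q^0}(Y;Z|X) = 0$. Once this is noted, all four claims follow from the definitions of $UI$, $SI$, $CI$ together with a single identity on how conditional and unconditional mutual informations relate under a fixed pair of marginals.

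The first step is the identity
\begin{equation*}
I_Q(Y;X|Z) - I_Q(Y;Z|X) \;=\; I_Q(Y;X) - I_Q(Y;Z),
\end{equation*}
which is just the chain rule $I_Q(Y;XZ) = I_Q(Y;X)+I_Q(Y;Z|X) = I_Q(Y;Z)+I_Q(Y;X|Z)$ rearranged. For any $Q\in\Delta_{Q^0}$, the marginals $Q_{YX}=Q^0_{YX}$ and $Q_{YZ}=Q^0_{YZ}$ are fixed, so the right-hand side is a constant depending only on $Q^0$. Consequently, minimizing $I_Q(Y;X|Z)$ over $\Delta_{Q^0}$ is equivalent to minimizing $I_Q(Y;Z|X)$, and conversely. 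Since mutual information is nonnegative and $Q^0$ itself lies in $\Delta_{Q^0}$ and satisfies $I_{Q^0}(Y;Z|X)=0$ by the Markov property, $Q^0$ simultaneously achieves both minima.

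The four claims now fall out. For $UI_{Q^0}(Y;X\backslash Z)$, the minimum in \eqref{subeq:UIy} is attained at $Q^0$, giving $UI_{Q^0}(Y;X\backslash Z)=I_{Q^0}(Y;X|Z)$. For $UI_{Q^0}(Y;Z\backslash X)$, the analogous minimum equals $I_{Q^0}(Y;Z|X)=0$. Then $CI_{Q^0}(Y;X,Z)=I_{Q^0}(Y;X|Z)-UI_{Q^0}(Y;X\backslash Z)=0$ by \eqref{eq:MIdec2}. Finally, using \eqref{eq:MIdec1} and the chain-rule identity above,
\begin{equation*}
SI_{Q^0}(Y;X,Z) = I_{Q^0}(Y;X) - I_{Q^0}(Y;X|Z) = I_{Q^0}(Y;Z) - I_{Q^0}(Y;Z|X) = I_{Q^0}(Y;Z).
\end{equation*}

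The only delicate point is justifying that the infimum in the definition of $UI$ is actually attained at $Q^0$; this is immediate once one observes that $\Delta_{Q^0}$ is a (compact) polytope of distributions with fixed bivariate marginals, that $Q^0$ lies in it, and that the chain-rule identity reduces the minimization of the conditional mutual information to the minimization of a nonnegative quantity achieved by $Q^0$. No other subtleties arise; the remaining manipulations are routine.
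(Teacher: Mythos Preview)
Your proof is correct and follows the same approach as the paper: both exploit the Markov chain $Y-X-Z$ under $Q^{0}$ to conclude $I_{Q^{0}}(Y;Z|X)=0$, and hence that $Q^{0}$ attains the minimum defining $UI$. The paper's proof is terser---it only states that $Q^{0}$ solves the optimization~\eqref{subeq:UIy} and leaves the derivation of $SI$, $CI$, and $UI(Y;Z\backslash X)$ implicit---whereas you spell out the chain-rule identity that makes the equivalence of the two minimizations explicit and then read off all four quantities; this added detail is helpful and does not change the argument.
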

\begin{proof}
	The distribution $Q^0$ defines a Markov chain $Y-X-Z$, which implies that $$I_{Q^0}(Y;Z|X)=0\le \min_{Q\in\Delta_P} I_Q(Y;Z|X).$$
	Hence, $Q^0$ solves the optimization problem~\eqref{subeq:UIy}.
\end{proof}
In the setting of the Lemma, $CI_{Q^0}(Y;X,Z)=0$ and therefore $Q^0$ is a \emph{zero-synergy distribution}. 

The information decomposition leads us to a new interpretation of the IB as a \emph{Unique Information Bottleneck}. 
To see this we first make the following definition. 
\begin{definition}[IB curve]\label{def:IBcurve}
	The IB curve is defined as follows \cite{witsenhausen1975,Harremoes2007,Tishbygilad}:
	\begin{align}
	B(r):= \max \{I(Z;Y):I(Z;X)\le r, \text{ }r\ge 0\}. 
	\end{align}
	Here the maximization is over all random variables $Z$ satisfying the Markov condition $Y-X-Z$ with fixed $(X,Y)\sim P$ and it suffices to restrict the size of $\Zcal$ to $|\Xcal|$. 
\end{definition}
The IB curve is concave and monotonically nondecreasing~\cite{witsenhausen1975,Tishbygilad}. 
We can explore the IB curve by solving the following optimization problem:
\begin{align}\label{eq:IBobjective}
\min_{e\in\mathsf{M}(\Xcal;\Zcal)} \Big[ UI_{Q^0}(Y;X\backslash Z) + \beta I_{Q^0}(Z;X) \Big]. 
\end{align}
Here $Q^0$ is a zero-synergy distribution and $\beta\in[0,1]$ is a Lagrange multiplier. 
Equation \eqref{eq:IBobjective} has the flavor of a rate-distortion problem  \cite{Harremoes2007} where the term  $UI_{Q^0}(Y;X\backslash Z)=I_{Q^0}(Y;X|Z)$ is interpreted as the average \emph{distortion} and $I_{Q^0}(Z;X)$ as the \emph{rate}. 
Classically, the IB is formulated as the maximization of $I_{Q^0}(Y;Z)-\beta I_{Q^0}(Z;X)$ \cite{InformationBottleneck}. 
By Lemma~\ref{lem:UI_CMI}, we have that $UI_{Q^0}(Y;X\backslash Z)=I_{Q^0}(Y;X|Z)=I_{Q^0}(Y;X)-I_{Q^0}(Y;Z)$. 
Since $I_{Q^0}(Y;X)=I_P(X;Y)$ is constant, it follows that the Unique Information Bottleneck defined by \eqref{eq:IBobjective} is equivalent to the classical IB. 
Each point on the IB curve satisfies the Markov condition $Y-X-Z$ which implies that the solution is always constrained to have \emph{zero synergy} about the output $Y$. 

Like the $UI$, the deficiency also induces an information decomposition as we show next.  

\vspace{.2cm}
\subsubsection{Deficiency induces an information decomposition}\label{app:defdecomp}
\label{ssec:DB}
We first propose a general construction that forms the basis of an information decomposition satisfying~\eqref{eq:MIdec1}-\eqref{eq:MIdec3}  (proved in the Appendix):
\begin{proposition} \label{prop:constructdecomp}
	Let $(Y,X,Z)\sim P$ and let $\delta^X$ be a nonnegative function 
	defined on the simplex~$\mathbb{P}_{\Ycal\times\Xcal\times\Zcal}$ 
	that satisfies the bound:
	\begin{align} \label{subeq:delbounds}
		0\le \delta^X(P) \le \min\{I(Y;X),I(Y;X|Z)\}.
	\end{align}
	Let $\Xcal'=\Zcal$, $\Zcal'=\Xcal$, and define a function $\tau:\mathbb{P}_{\Ycal\times\Xcal\times\Zcal}\to \mathbb{P}_{\Ycal\times\Xcal'\times\Zcal'}$ such that $\tau(P_{YXZ}(y,x,z))$ $=P_{YX'Z'}(y,z,x)$. Let 
	$\delta^Z(P):=\delta^X(\tau(P))$.
	Then the following functions define a nonnegative information decomposition satisfying~\eqref{eq:MIdec1}-\eqref{eq:MIdec3}:
	\begin{subequations}
		\label{subeq:decompgeneral}
		\begin{align*}
			\TUI(Y;X\backslash Z) &= \max\{\delta^X,\delta^Z+I(Y;X)-I(Y;Z)\},
			\\
			\TUI(Y;Z\backslash X) &= \max\{\delta^Z,\delta^X+I(Y;Z)-I(Y;X)\},
			\\
			\TSI(Y;X,Z) &=\min\{I(Y;X)-\delta^X, I(Y;Z)-\delta^Z\},
			\\
			\TCI(Y;X,Z) &= \min\{I(Y;X|Z)-\delta^X, I(Y;Z|X)-\delta^Z\}.
		\end{align*}
	\end{subequations}
\end{proposition}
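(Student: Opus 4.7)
The plan is to verify the consistency identities \eqref{eq:MIdec1}--\eqref{eq:MIdec3} (and their $X\leftrightarrow Z$ counterparts) together with nonnegativity of all four quantities by direct algebraic manipulation, reducing everything to a single linear-shift identity for the $\max$ function. Throughout, abbreviate $a:=\delta^X(P)$, $b:=\delta^Z(P)=\delta^X(\tau(P))$, and $\Delta:=I(Y;X)-I(Y;Z)$. Two preliminary facts drive the proof: first, since $\tau$ merely swaps the roles of $X$ and $Z$ in the joint, applying the hypothesis \eqref{subeq:delbounds} to $\tau(P)$ yields $0\le b\le \min\{I(Y;Z),I(Y;Z|X)\}$; second, the chain rule gives the standard identity $I(Y;X|Z)-I(Y;Z|X)=I(Y;X)-I(Y;Z)=\Delta$.

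Next I would rewrite $\TSI$ and $\TCI$ using the elementary identity $\min\{A-x,B-y\}=A-\max\{x,(A-B)+y\}$. Taking $(A,x,B,y)=(I(Y;X),a,I(Y;Z),b)$, and then $(A,x,B,y)=(I(Y;X|Z),a,I(Y;Z|X),b)$, and using $A-B=\Delta$ in both cases, I obtain
\begin{align*}
\TSI(Y;X,Z)&=I(Y;X)-\max\{a,b+\Delta\},\\
\TCI(Y;X,Z)&=I(Y;X|Z)-\max\{a,b+\Delta\}.
\end{align*}
The quantity $\max\{a,b+\Delta\}$ is exactly $\TUI(Y;X\backslash Z)$, so adding $\TUI(Y;X\backslash Z)$ to each of the two displays immediately recovers $\TUI(Y;X\backslash Z)+\TSI=I(Y;X)$ and $\TUI(Y;X\backslash Z)+\TCI=I(Y;X|Z)$, which are the consistency relations \eqref{eq:MIdec1} and \eqref{eq:MIdec2}.

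For \eqref{eq:MIdec3} and the symmetric relations, I would invoke the shift identity $\max\{b,a-\Delta\}=\max\{a,b+\Delta\}-\Delta$, which shows $\TUI(Y;Z\backslash X)=\TUI(Y;X\backslash Z)-\Delta$. Combining with the previous step,
\begin{align*}
\TUI(Y;Z\backslash X)+\TCI=I(Y;X|Z)-\Delta=I(Y;Z|X),
\end{align*}
and the symmetric identity $\TUI(Y;Z\backslash X)+\TSI=I(Y;Z)$ follows in the same way. Summing $\TUI(Y;X\backslash Z)+\TSI=I(Y;X)$ with $\TUI(Y;Z\backslash X)+\TCI=I(Y;Z|X)$ gives $I(Y;X)+I(Y;Z|X)=I(Y;XZ)$, which is \eqref{eq:MIdec3}.

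Finally, nonnegativity is essentially free. The two $\TUI$ quantities are maxima that include the nonnegative numbers $a$ and $b$, respectively, so they are $\ge 0$. The four arguments appearing in the $\min$'s defining $\TSI$ and $\TCI$, namely $I(Y;X)-a$, $I(Y;Z)-b$, $I(Y;X|Z)-a$, $I(Y;Z|X)-b$, are all nonnegative by the hypothesis applied to $P$ and to $\tau(P)$. I do not anticipate a genuine obstacle: the proof is bookkeeping once one notices (i) that applying the hypothesis to $\tau(P)$ is what buys the bounds on $\delta^Z$, and (ii) that the $\max$ in the definition of $\TUI$ is chosen precisely so that $\TSI$ and $\TCI$, which could be negative if one set $\TUI=\delta^X$ naively, remain nonnegative after the linear shift by $\Delta$.
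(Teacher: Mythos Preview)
Your proof is correct and follows essentially the same approach as the paper: both verify nonnegativity directly from the bounds on $\delta^X$ and $\delta^Z$ (the latter obtained by applying the hypothesis to $\tau(P)$) and then check the consistency identities \eqref{eq:MIdec1}--\eqref{eq:MIdec3} by elementary algebra. The only cosmetic difference is that the paper splits into the two cases $I(Y;Z)-\delta^Z\lessgtr I(Y;X)-\delta^X$ and evaluates the $\max/\min$ explicitly in each, whereas you absorb that case distinction into the single identity $\min\{A-x,B-y\}=A-\max\{x,(A-B)+y\}$ together with the shift $\max\{b,a-\Delta\}=\max\{a,b+\Delta\}-\Delta$; the content is the same.
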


We now apply the construction in Proposition~\ref{prop:constructdecomp} to derive an information decomposition based on the deficiency. 
The following proposition is proved in the Appendix. 
\begin{proposition}\label{lem:positivity_gdefi}
	Let~$(Y,X,Z)\sim P$, and 
	let $\kappa\in\mathsf{M}(\Xcal; \Ycal)$ and $d\in\mathsf{M}(\Zcal; \Ycal)$ be two channels 
	representing, resp., the conditional distributions $P_{Y|X}$ and $P_{Y|Z}$.
	Define $\delta^X=\delta^{\pi}(d,\kappa)$.
	Then the functions $\TUI$, $\TSI$, and $\TCI$ in Proposition~\ref{prop:constructdecomp} define a nonnegative information decomposition. 
\end{proposition}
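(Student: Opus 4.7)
The plan is to reduce the claim to a verification step for Proposition~\ref{prop:constructdecomp}. Specifically, once I show that the function $\delta^X(P) := \delta^\pi(d,\kappa)$ --- with $\pi = P_X$, $\kappa = P_{Y|X}$, and $d = P_{Y|Z}$ read off from $P$ --- satisfies $0 \le \delta^X(P) \le \min\{I(Y;X),\,I(Y;X|Z)\}$, the nonnegative information decomposition follows by a direct appeal to Proposition~\ref{prop:constructdecomp}. Nonnegativity is immediate since $\delta^\pi(d,\kappa)$ is an infimum of conditional KL divergences over a nonempty set of encoders.

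For the upper bound $\delta^\pi(d,\kappa) \le I(Y;X)$, I would exhibit a feasible encoder realizing this value. Take the constant encoder $e(z|x) := P_Z(z)$. Then $d\circ e(y|x) = \sum_z P_{Y|Z}(y|z) P_Z(z) = P_Y(y)$, so the objective at this $e$ equals $\KL(P_{Y|X}\|P_Y \mid P_X) = I(Y;X)$, and the minimum over $e$ can only be smaller.

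For the upper bound $\delta^\pi(d,\kappa) \le I(Y;X|Z)$, I would use the posterior encoder $e(z|x) := P_{Z|X}(z|x)$. Writing the identities $P_{Y|X}(y|x) = \sum_z P_{Y|Z,X}(y|z,x)\,P_{Z|X}(z|x)$ and $d\circ e(y|x) = \sum_z P_{Y|Z}(y|z)\,P_{Z|X}(z|x)$, I would apply the log-sum inequality pointwise in $y$, with the common weights $P_{Z|X}(z|x)$, to obtain $\KL(P_{Y|X=x}\,\|\,d\circ e(\cdot|x)) \le \sum_z P_{Z|X}(z|x)\,\KL(P_{Y|Z=z,X=x}\,\|\,P_{Y|Z=z})$. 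Averaging over $x\sim P_X$ and collecting terms recognizes the right-hand side as $\KL(P_{Y|XZ}\,\|\,P_{Y|Z}\mid P_{XZ}) = I(Y;X|Z)$.

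The main subtlety will be the log-sum step: one should be careful to note that it does \emph{not} invoke any Markov hypothesis on $P$, so the bound on $\delta^\pi(d,\kappa)$ by $I(Y;X|Z)$ holds for arbitrary joint distributions $P$ on $\Ycal\times\Xcal\times\Zcal$. Everything else is then a bookkeeping application of Proposition~\ref{prop:constructdecomp} with $\delta^X = \delta^\pi(d,\kappa)$ and $\delta^Z = \delta^{P_Z}(\kappa,d)$ obtained via the coordinate swap $\tau$.
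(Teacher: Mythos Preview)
Your proposal is correct and mirrors the paper's proof: both reduce to verifying the bound~\eqref{subeq:delbounds} and do so by evaluating the deficiency objective at the constant encoder $e(z|x)=P_Z(z)$ for the $I(Y;X)$ bound and at the posterior encoder $e(z|x)=P_{Z|X}(z|x)$ for the $I(Y;X|Z)$ bound. Your log-sum inequality step is exactly the paper's ``convexity of the KL divergence'' step, so the two arguments are essentially identical.
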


The next proposition shows the relationship between the decompositions induced by the deficiency (see Proposition~\ref{lem:positivity_gdefi}) and that induced by the $UI$ (see \eqref{subeq:UIy}--\eqref{eq:MIdec3}). 
\begin{proposition}[\hspace{-.1mm}{{\cite{e16042161}}}]\label{lem:minsyn}
	\begin{align*}
		\TUI(Y;X\backslash Z) \le UI(Y;X\backslash Z),\\
		\TUI(Y;Z\backslash X) \le UI(Y;Z\backslash X),\\
		\TSI(Y;X,Z) \ge SI(Y;X,Z),\\
		\TCI(Y;X,Z) \ge CI(Y;X,Z),
	\end{align*}
	with equality if and only if there exists~$Q\in\Delta_P$ such that~$\TCI_{Q}(Y;X,Z)=0$.
\end{proposition}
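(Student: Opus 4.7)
The plan is to collapse the four inequalities to one, establish that one via a data-processing argument, and then leverage the invariance of the deficiency across $\Delta_P$ to pin down the equality condition.

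\emph{Reduction.} Both triples $(\TUI(Y;X\backslash Z),\TSI,\TCI)$ and $(UI(Y;X\backslash Z),SI,CI)$ satisfy the decomposition identities~\eqref{eq:MIdec1}--\eqref{eq:MIdec3} with the same mutual information values, so subtracting them termwise yields
\begin{equation*}
UI(Y;X\backslash Z) - \TUI(Y;X\backslash Z) = \TSI - SI = \TCI - CI = UI(Y;Z\backslash X) - \TUI(Y;Z\backslash X).
\end{equation*}
Hence, all four inequalities follow from $\TUI(Y;X\backslash Z) \le UI(Y;X\backslash Z)$.

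\emph{Core inequality.} I first show $\delta^X \le UI(Y;X\backslash Z)$. Fix any $Q\in\Delta_P$; the constraint $Q_{YZ}=P_{YZ}$ gives $Q_{Y|Z}=d$, so choosing $e := Q_{Z|X}$ makes $d\circ e$ equal to the $Y|X$-marginal of the Markov law $Q^M(y,x,z) := Q_X(x)\,Q_{Z|X}(z|x)\,Q_{Y|Z}(y|z)$. The identity $\KL(Q\|Q^M) = I_Q(Y;X|Z)$ combined with the monotonicity of KL divergence under marginalization of $Z$ yields
\begin{equation*}
\KL(\kappa\,\|\,d\circ e\mid\pi) = \KL(Q_{YX}\,\|\,Q^M_{YX}) \le \KL(Q\,\|\,Q^M) = I_Q(Y;X|Z).
\end{equation*}
Minimizing the left side over $e$ and the right side over $Q\in\Delta_P$ gives $\delta^X \le UI(Y;X\backslash Z)$; by the symmetric construction $\delta^Z \le UI(Y;Z\backslash X)$. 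Using the identity $UI(Y;X\backslash Z) - UI(Y;Z\backslash X) = I(Y;X) - I(Y;Z)$, both terms in the $\max$ defining $\TUI(Y;X\backslash Z)$ are bounded by $UI(Y;X\backslash Z)$, which closes the inequality direction.

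\emph{Equality case.} The key observation is that $\delta^X$ depends only on $P_X$, $P_{Y|X}$, and $P_{Y|Z}$, which are all fixed throughout $\Delta_P$; hence $\delta^X_Q = \delta^X$ for every $Q\in\Delta_P$, and likewise for $\delta^Z_Q$. For $(\Rightarrow)$: if $\TUI(Y;X\backslash Z) = UI(Y;X\backslash Z)$, then (possibly after swapping $X \leftrightarrow Z$) the $\max$ is attained by $\delta^X = UI(Y;X\backslash Z)$. Letting $Q^*$ be the BROJA minimizer so that $I_{Q^*}(Y;X|Z) = UI(Y;X\backslash Z)$, one obtains $\delta^X_{Q^*} = I_{Q^*}(Y;X|Z)$, and Proposition~\ref{prop:constructdecomp} forces $\TCI_{Q^*}(Y;X,Z) = 0$. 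For $(\Leftarrow)$: if $\TCI_Q(Y;X,Z) = 0$ for some $Q\in\Delta_P$, the same proposition gives $\delta^X_Q = I_Q(Y;X|Z)$ (or the $Z$-analog); since $I_Q(Y;X|Z) \ge UI(Y;X\backslash Z)$ and $\delta^X_Q = \delta^X \le UI(Y;X\backslash Z)$, both must equal $UI(Y;X\backslash Z)$, saturating the $\max$ and giving $\TUI(Y;X\backslash Z) = UI(Y;X\backslash Z)$. The main obstacle is the bookkeeping to verify that the two ``WLOG'' cases (which argument of the $\max$ attains the maximum) map consistently between the two directions of the iff, which is what ensures the equality characterization propagates to all four terms simultaneously.
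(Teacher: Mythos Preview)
The paper does not actually supply a proof of this proposition; it is stated with a direct citation to~\cite{e16042161} and no argument is given either inline or in the Appendix (the Appendix only proves Propositions~\ref{prop:constructdecomp} and~\ref{lem:positivity_gdefi}). So there is nothing in the present paper to compare your argument against.

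That said, your proof is correct and self-contained. The reduction step is exactly right: since both $(\TUI,\TSI,\TCI)$ and $(UI,SI,CI)$ satisfy \eqref{eq:MIdec1}--\eqref{eq:MIdec3} with the same mutual informations, a single inequality implies all four. Your data-processing argument for $\delta^X \le UI(Y;X\backslash Z)$ is clean and is essentially the standard route in the cited reference: form the Markovification $Q^M$ of any $Q\in\Delta_P$, use $\KL(Q\|Q^M)=I_Q(Y;X|Z)$, and drop to the $(Y,X)$-marginal. The equality analysis is also sound; the crucial point you identify---that $\delta^X$ and $\delta^Z$ depend only on the pairwise marginals $P_{YX}$, $P_{YZ}$ and hence are constant across $\Delta_P$---is precisely what makes the ``there exists $Q\in\Delta_P$'' formulation work, and your two-case bookkeeping (which branch of the $\max$/$\min$ is active) correctly closes both directions of the iff.
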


\vspace{.3cm}
\section{Experiments}
\label{sec:experiments}

\subsection{Experiments on \textsc{MNIST}}
We present experiments on the \textsc{MNIST} dataset.
Classification on \textsc{MNIST} is a very well-studied problem. The main objective of these experiments is to evaluate the information-theoretic properties of the representations learned by the VDB model and to compare the classification accuracy for different values of $M$, the number of encoder output samples used in the training objective~\eqref{eq:emp-VDB-obj} when using the \emph{oneshot} strategy. As mentioned in Section~\ref{sec:method}, when $M=1$, we recover the VIB model~\cite{alemi2016variational}.

\textbf{Settings.} 
For the encoder, we use a fully connected feedforward network with 784 input units--1024 ReLUs--1024 ReLUs--512 linear output units. The deterministic output of this network is interpreted as the vector of means and variances of a 256-dimensional Gaussian distribution. The decoder is simply a softmax with $10$ classes. These are the same settings of the model used by~\cite{alemi2016variational}. 
At test time, the classifier is evaluated using $L$ encoder samples (i.e., we use $\frac{1}{L}\sum_{j=1}^L d(y|z^{(j)})$ where $z^{(j)}\sim e(z|x)$). 
We implement the algorithm in TensorFlow and train for $200$ epochs using the Adam optimizer. 

\textbf{Test accuracy.} 
The resulting test accuracy for different values of~$\beta$ and~$M$ is reported in Fig.~\ref{fig:exp1a}(a) and Table~\ref{tab:testacc}. 
As can be seen from 
Fig.~\ref{fig:exp1a}(a), the test accuracy is stable with increasing $M$. 
From Table~\ref{tab:testacc} we see that choosing $M$ larger than one can in fact slightly improve test accuracy. 
\begin{figure*}[h!]
	\centering
	\scalebox{.9}{	
		\begin{tabular}{cccc}
			\includegraphics[clip=true,trim=3.5cm 9.5cm 4cm 10cm,width=.26\textwidth]{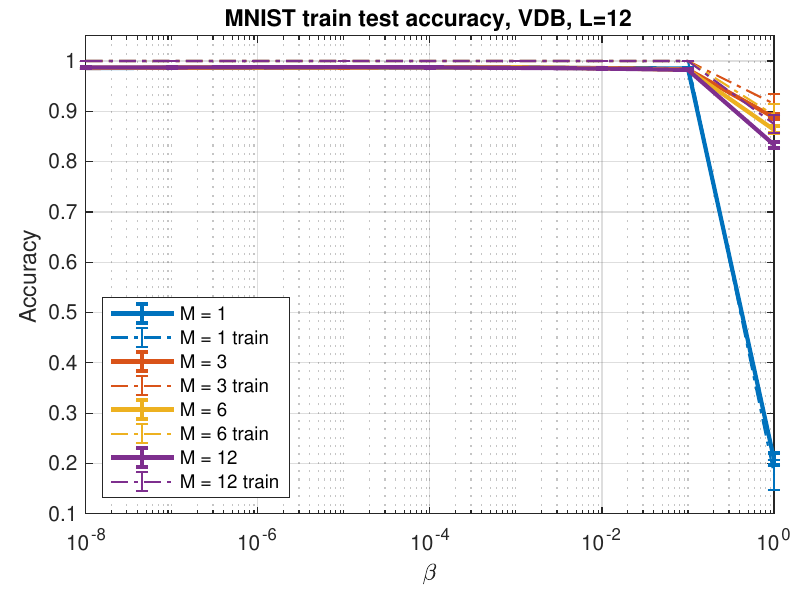}& 
			\includegraphics[clip=true,trim=3.5cm 9.5cm 4cm 10cm,width=.26\textwidth]{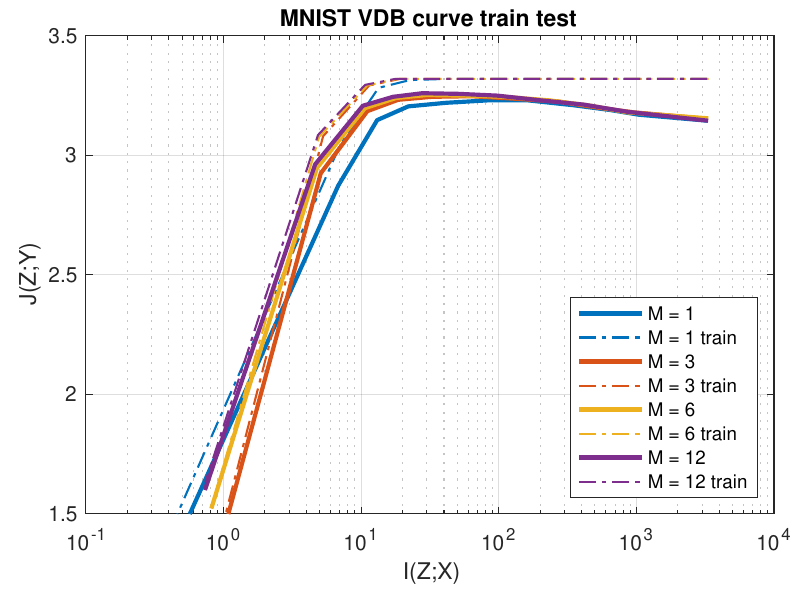}&
			\includegraphics[clip=true,trim=3.5cm 9.5cm 4cm 10cm,width=.26\textwidth]{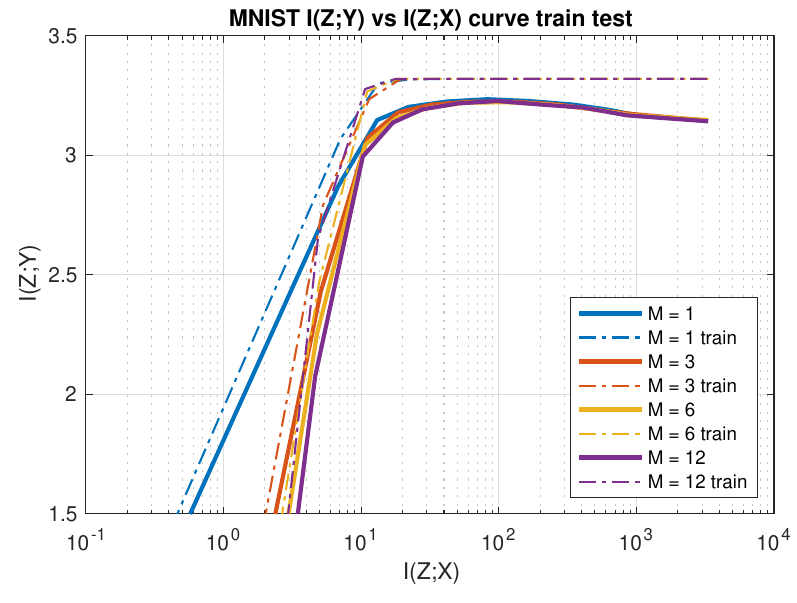}&
			\includegraphics[clip=true,trim=3.5cm 9.5cm 4cm 10cm,width=.26\textwidth]{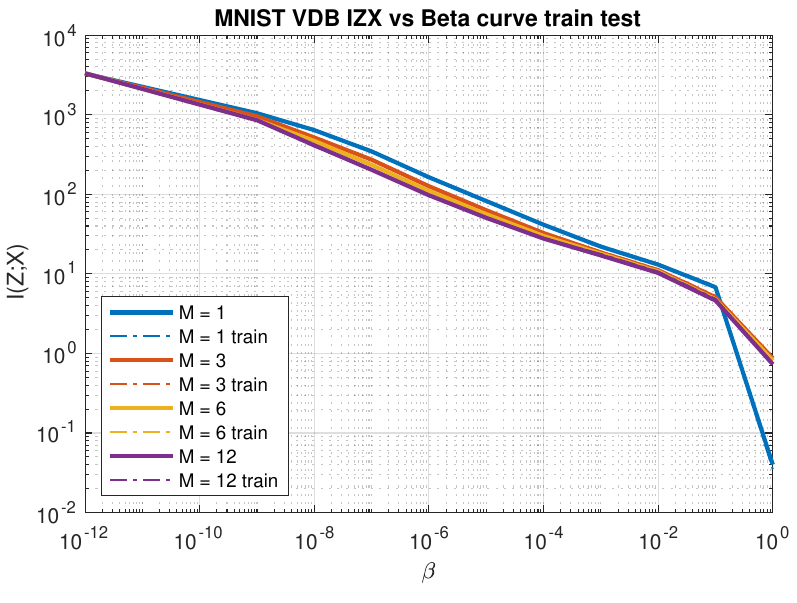}\\
			(a) & (b) & (c) & (d)
		\end{tabular}
	}	
	\caption{
		Effect of the regularization parameter $\beta$: (a) Accuracy on train and test data for the \textsc{MNIST} after training the VDB for different values of~$M$. Here $M$ is the number of encoder samples used in the training objective, and $L=12$ is the number of encoder samples used for evaluating the classifier. 
		(b) The VDB curve for different values of~$\beta$. The curves are averages over 5 repetitions of the experiment. Each curve corresponds to one value of $M=1,3,6,12$. 
		(c) The attained mutual information values $I(Z;Y)$ vs.\ $I(Z;X)$ for different values of~$\beta$.
		(d) $I(X;Z)$ vs.~$\beta$. 
		For~$M=1$, the VDB and the VIB models coincide. \\
	}
	\label{fig:exp1a}
\end{figure*}
\begin{table}[!b]
	\centering 
	\scalebox{1}{ 
		\begin{threeparttable}
			\captionsetup{justification=raggedright} 
			\caption{Test accuracy on \textsc{MNIST} for different values of~$\beta$ and~$M$, bottleneck size $K$, and $L= 12$. 
			} 
			\label{tab:testacc}
			\small 
			\begin{tabular}{cclcccc} 
				\toprule
				\multicolumn{1}{c}{$\beta$} & \multicolumn{1}{c}{$K$} & \multicolumn{4}{c}{$M$}\\
				\cmidrule{3-6}
				& & 1 & 3 & 6 & 12\\
				\midrule
				\num{e-05} & 256 & 0.9869 &  0.9873 & \textbf{0.9885} &  0.9878\\
				& 2              & 0.9575 &  0.9678 & \textbf{0.9696} &  0.9687\\
				\midrule
				\num{e-03} & 256 & 0.9872 &  0.9879 & 0.9875 &  \textbf{0.9882}\\
				& 2              & 0.9632 &  0.9726 & \textbf{0.9790} &  0.9702\\
				\bottomrule
			\end{tabular}
		\end{threeparttable}
	}		
\end{table} 

\textbf{Information curve.} 
The IB curve traces the mutual information~$I(Z;Y)$ of representation and output (sufficiency) vs.~the mutual information~$I(Z;X)$ of representation and input (minimality), for different values of the regularization parameter~$\beta$. 
In our method, ``more sufficient''  
is replaced by ``less deficient''. 
The term corresponding to sufficiency is $$J(Z;Y):=H(Y)-\mathbb{E}_{(x,y)\sim \ptrain} \left[-\log(\int d(y|z)e(z|x) \mathop{dz})\right].$$
Here $H(Y)$ is the entropy of the output, which for \textsc{MNIST} is $\log_2(10)$. 
Fig.~\ref{fig:exp1a}(b) shows the VDB curve which traces~$J(Z;Y)$ vs.~$I(Z;X)$ for different values of~$\beta$ at the end of training. 
Note that the curve corresponding to $M=1$ is just the VIB curve which traces~$I(Z;Y)$ vs.~$I(Z;X)$ for different values of~$\beta$.   
For orientation, lower values of $\beta$ have higher values of~$I(Z;X)$ (towards the right of the plot). 
For small values of~$\beta$, when the effect of the regularization is negligible, the bottleneck allows more information from the input through the representation. 
In this case, $J(Z;Y)$ increases on the training set, but not necessarily on the test set. 
This is manifest in the gap between the train and test curves indicative of a degradation in generalization. 
For intermediate values of~$\beta$, \emph{the gap is smaller for larger values of $M$ (our method)}. 
Fig.~\ref{fig:exp1a}(c) plots the attained mutual information values $I(Z;Y)$ vs.~$I(Z;X)$ after training with the VDB objective for different values of~$\beta$, 
while Fig.~\ref{fig:exp1a}(c) plots the minimality term~$I(Z;X)$ vs.~$\beta$. 
Evidently, the levels of compression vary depending on $M$. 
For good values of $\beta$, \emph{higher values of $M$ (our method) lead to a more compressed representation while retaining the same level of sufficiency}.
For example, for $\beta=10^{-5}$, setting $M=12$ requires storing $\sim 50$ less bits of information about the input when compared to the setting $M=1$, while retaining the same mutual information about the output.

\begin{figure*}[h!]
	\centering
	\resizebox {18cm}{!} {
		\begin{tikzpicture}[x=0.75pt,y=0.75pt,yscale=-1,xscale=1]
		\draw (58,1879) node  {\includegraphics[width=55.25pt,height=76.5pt]{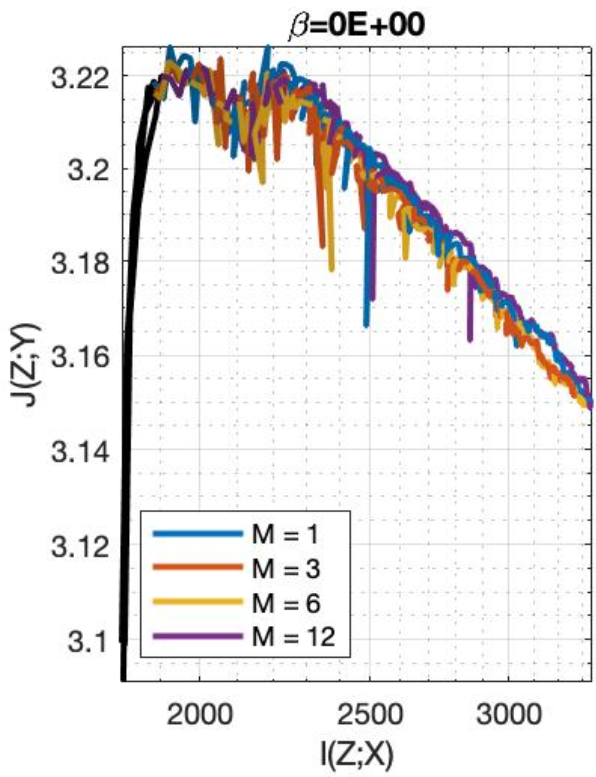}};
		\draw (136,1879) node  {\includegraphics[width=55.25pt,height=76.5pt]{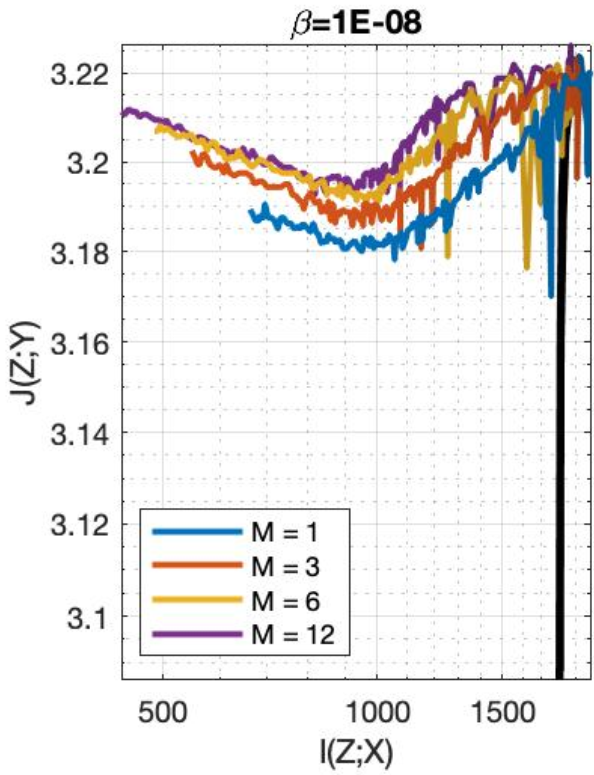}};
		\draw (292,1879) node  {\includegraphics[width=170.75pt,height=76.5pt]{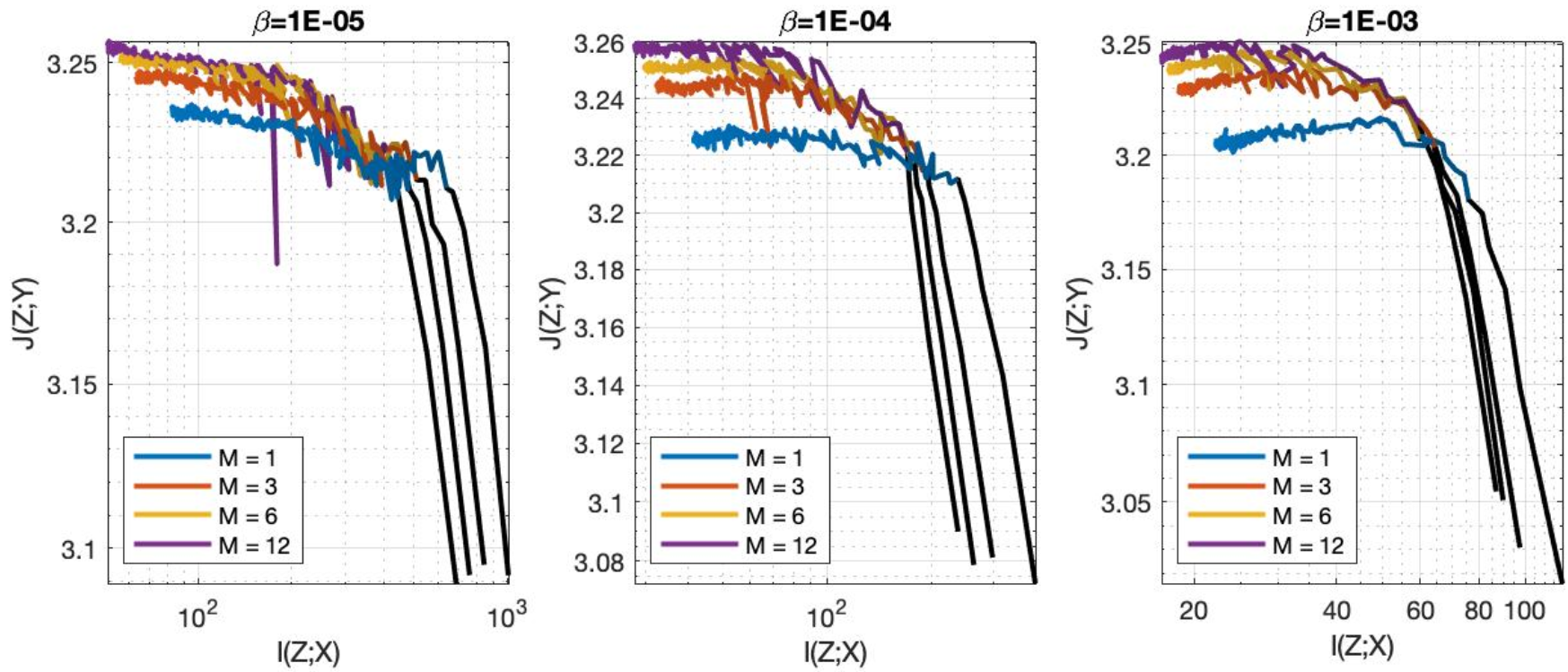}};
		\draw (488.25,1879) node  {\includegraphics[width=106.88pt,height=76.5pt]{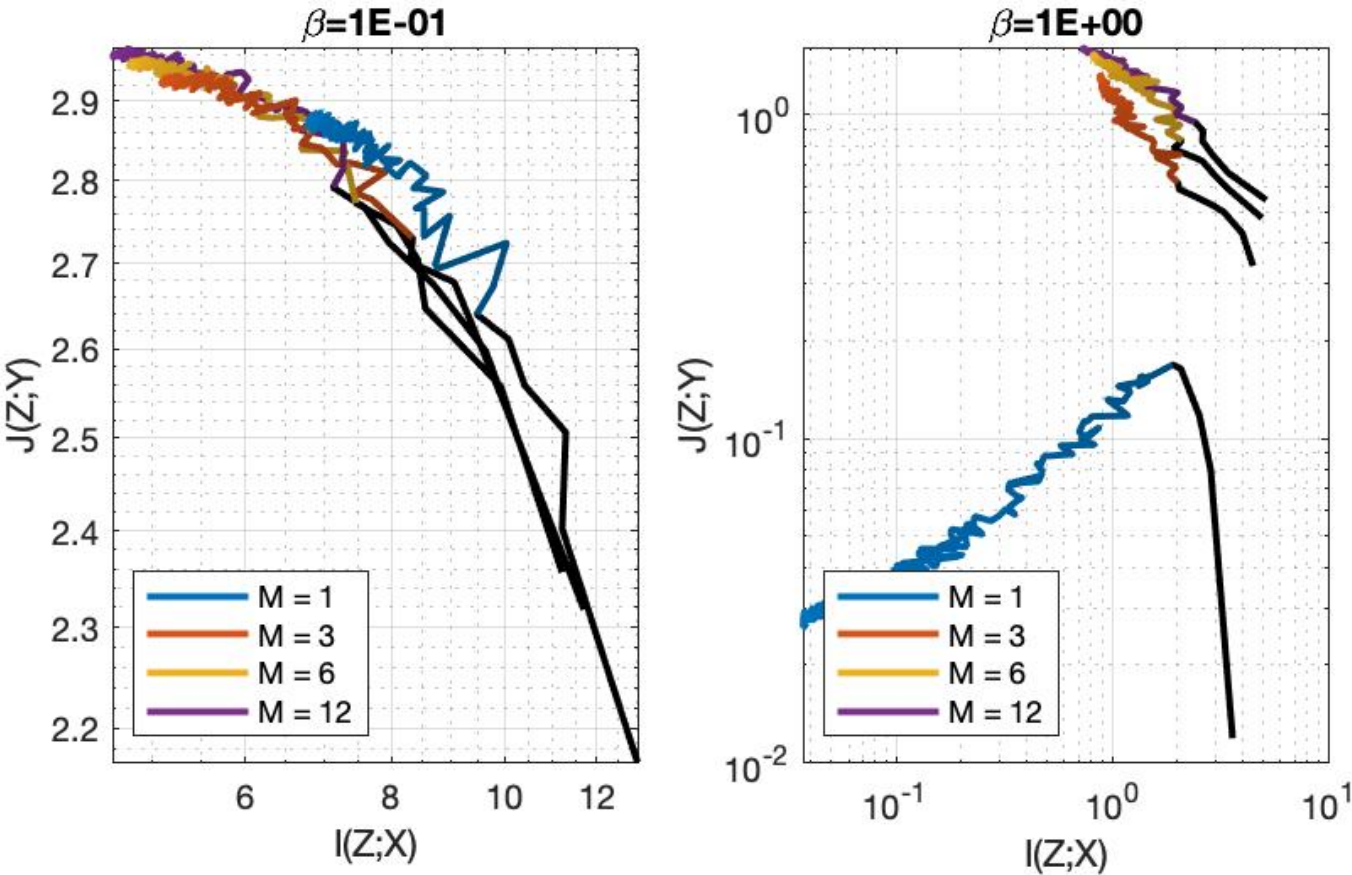}};
		\end{tikzpicture}}
	\caption{
		Evolution of the sufficiency and minimality terms (values farther up and to the left are better) over 200 training epochs (dark to light color) on \textsc{MNIST} with a 256-dimensional representation for different values of $\beta$. 
		The curves are averages over 20 repetitions of the experiment. $M=1$ corresponds to the VIB model. \\		
	} \label{fig:trainingdynamics}
\end{figure*}

\textbf{Training dynamics.} 
The dynamics of the information quantities during training are also interesting, and shown in Fig.~\ref{fig:trainingdynamics}. 
At early epochs, training mainly effects fitting of the input-output relationship and an \emph{increase of $J(Z;Y)$}. 
At later epochs, training mainly effects a \emph{decrease of $I(Z;X)$} leading to a better generalization. 
An exception is when the regularization parameter~$\beta$ is very small, in which case the representation captures more information about the input, and longer training decreases $J(Z;Y)$, which is indicative of overfitting. 
Higher values of~$M$ (our method) lead to the representation capturing more information about the output, while at the same time discarding more information about the input. $M=1$ corresponds to the VIB.

\textbf{Low dimensional representations.} 
To better understand the behavior of our method, we also visualize a 2-dimensional Gaussian representations after training the VDB with the \emph{oneshot} strategy for $M=1$ (when the VDB and VIB objectives are the same), and the \emph{sequential} strategy for $M=1$, and different values of the regularization parameter $\beta$. 
We use the same settings as before, with the only difference that the dimension of the output layer of the encoder is $4$, with two coordinates representing the mean, and two a diagonal covariance matrix. 
The results are shown in Fig.~\ref{fig:2dreps}. 
We see that for $\beta=10^{-4}$, representations of the different classes are well separated.
We also observe that as the frequency of the encoder updates is increased relative to the decoder (our method), individual clusters tend to be more spread out in latent space. 
This translates into a better discriminative performance when compared to the \emph{oneshot} (VIB) strategy as we show next with our robustness experiments. 
\begin{figure}[h!]
	\centering 
	\scalebox{1.1}{	
		\begin{small}	
			\begin{tabular}{cccc}
				$\beta$ & \emph{oneshot} (VIB) & $k=5$ (VDB) & $k=10$ (VDB) \\ \\
				\raisebox{.8cm}{$10^{-1}$} & 
				\includegraphics[width=1.75cm,clip=false,trim=0cm 0cm 0cm .7cm]{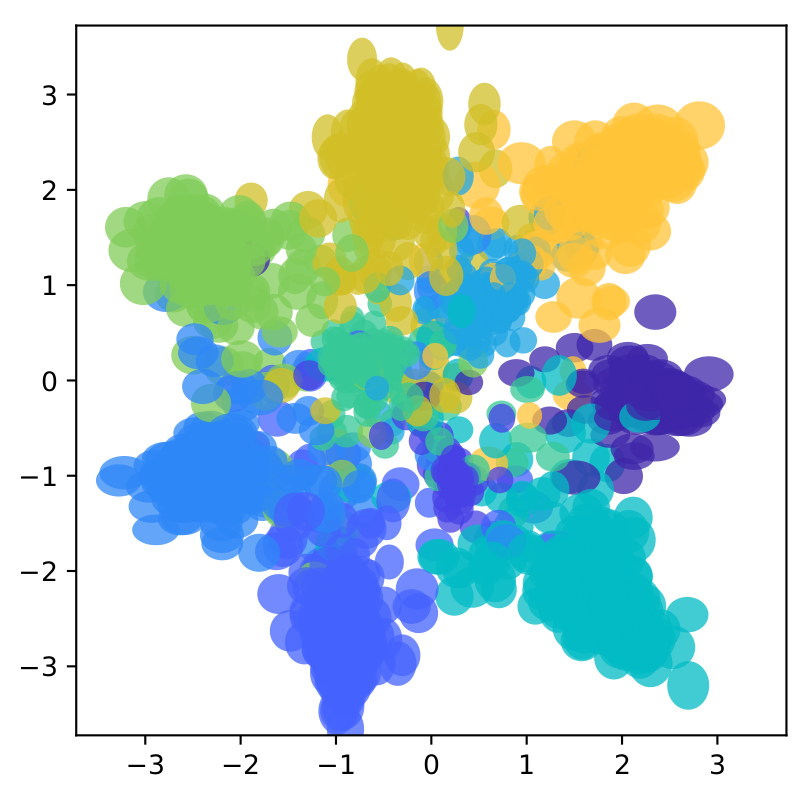} & 
				\includegraphics[width=1.75cm,clip=false,trim=0cm 0cm 0cm .7cm]{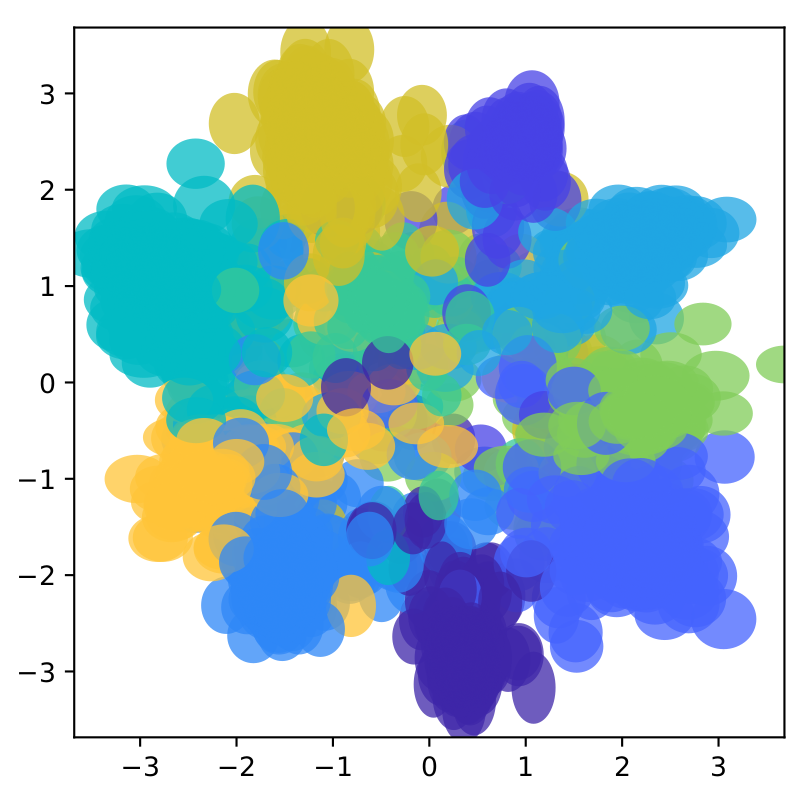} & 
				\includegraphics[width=1.75cm,clip=false,trim=0cm 0cm 0cm .7cm]{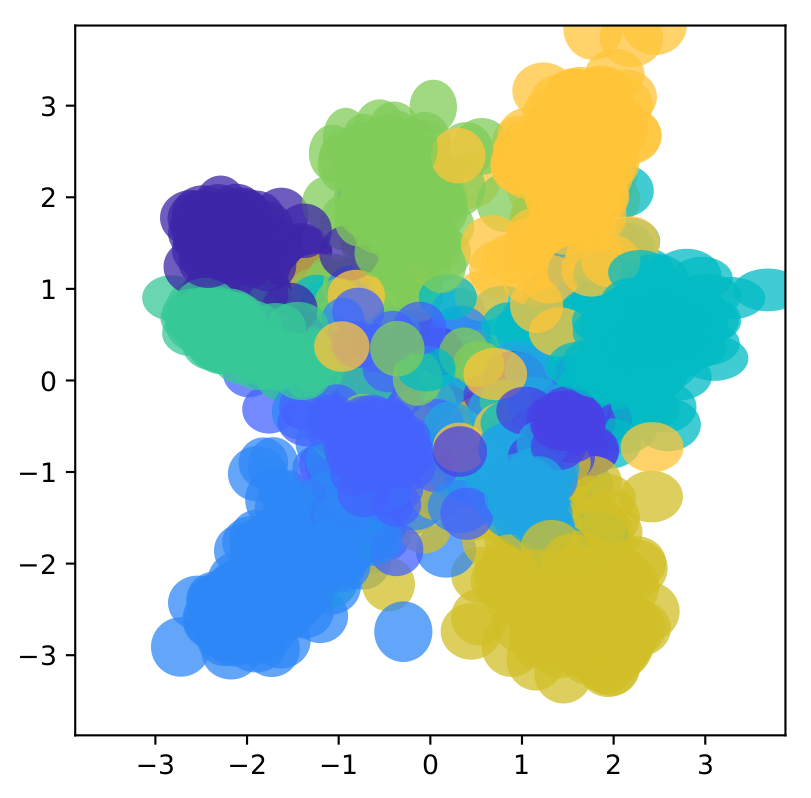} \\ 
				\raisebox{.8cm}{$10^{-3}$} & 
				\includegraphics[width=1.75cm,clip=false,trim=0cm 0cm 0cm .7cm]{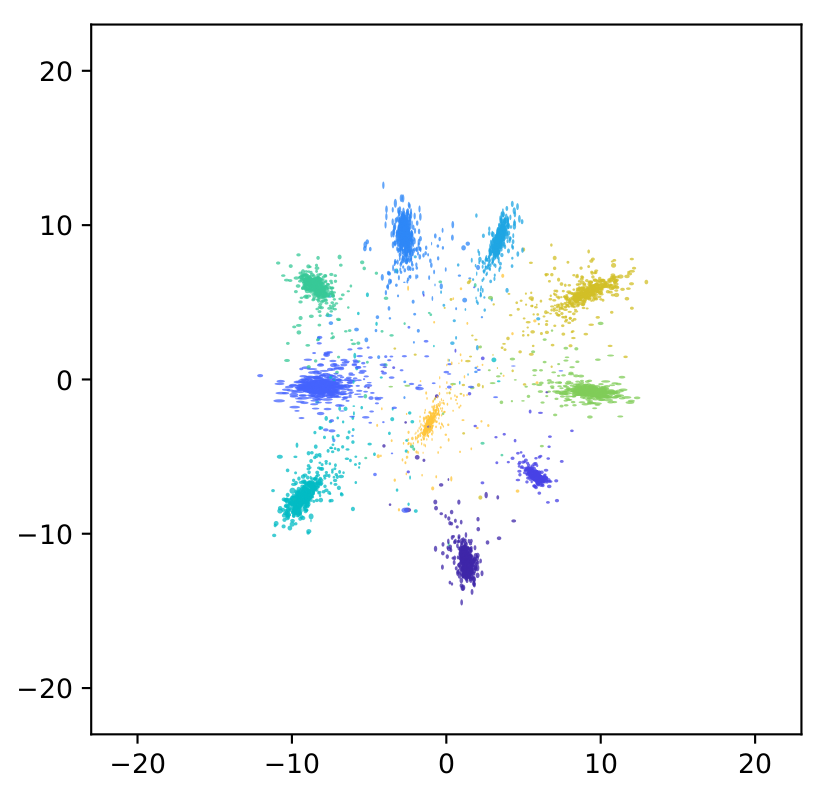} & 
				\includegraphics[width=1.75cm,clip=false,trim=0cm 0cm 0cm .7cm]{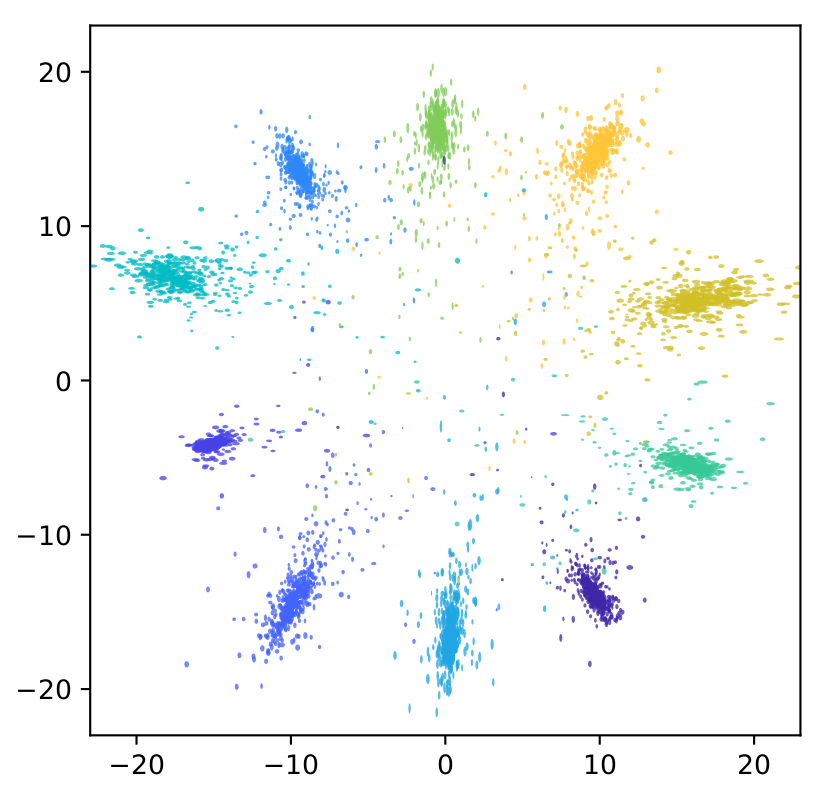} & 
				\includegraphics[width=1.75cm,clip=false,trim=0cm 0cm 0cm .7cm]{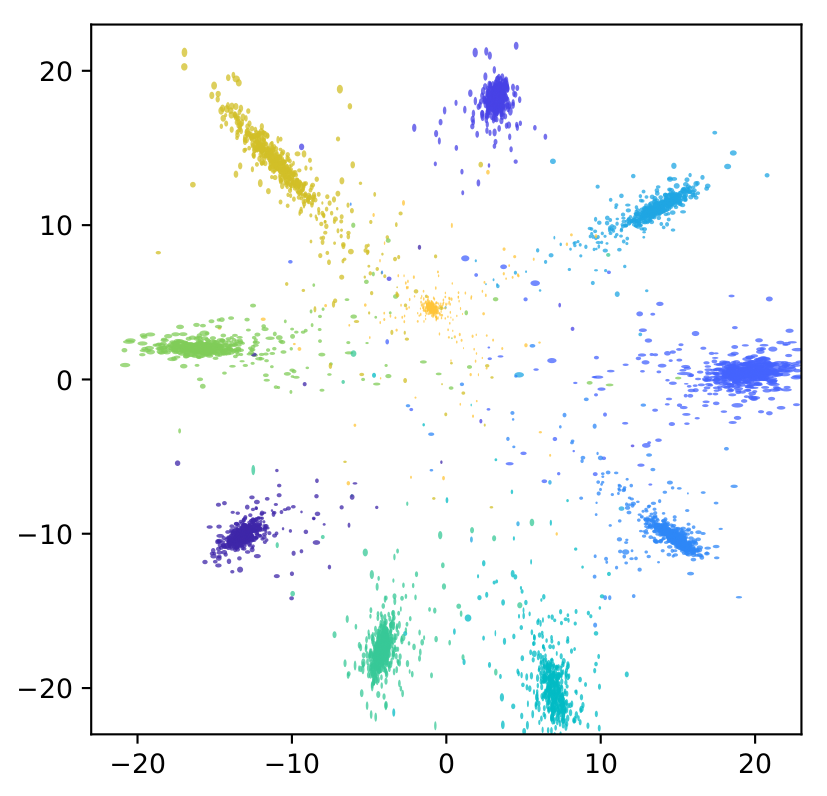} \\ 
				\raisebox{.8cm}{$10^{-4}$} & 
				\includegraphics[width=1.75cm,clip=false,trim=0cm 0cm 0cm .7cm]{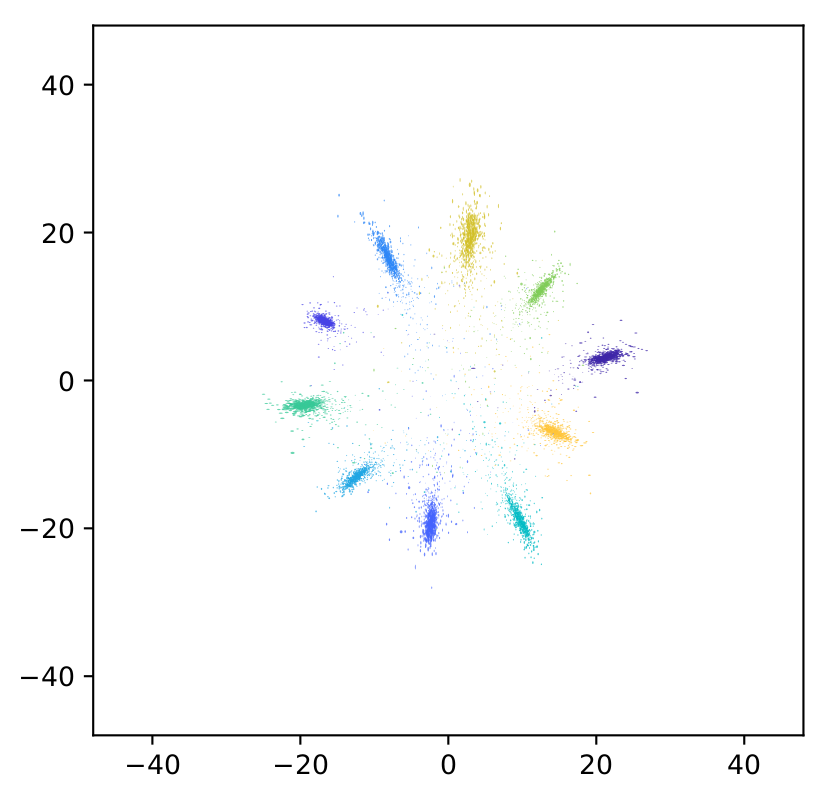} & 
				\includegraphics[width=1.75cm,clip=false,trim=0cm 0cm 0cm .7cm]{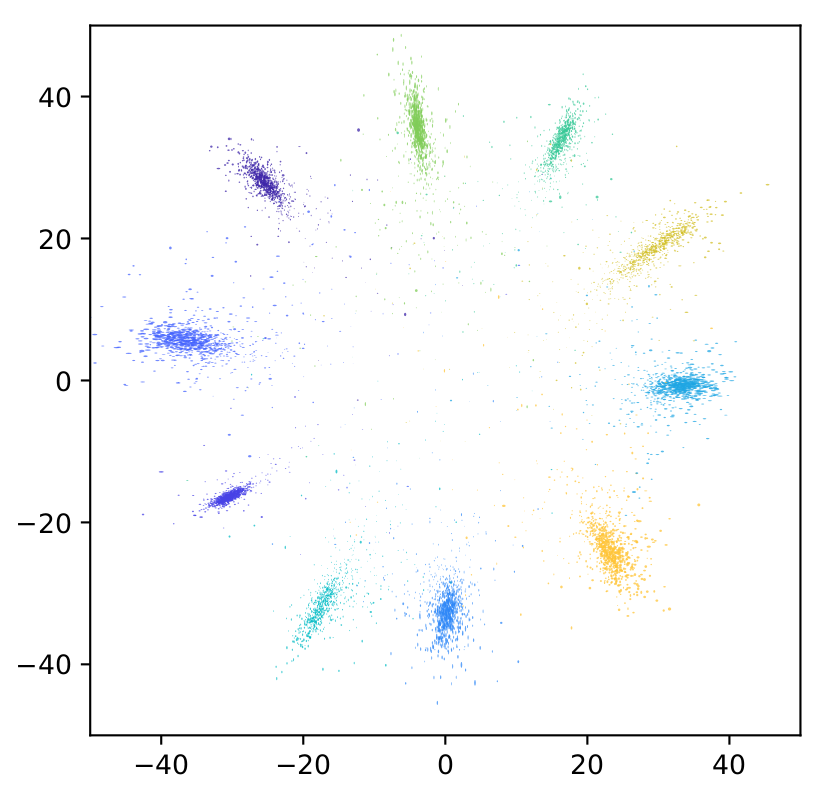} & 
				\includegraphics[width=1.75cm,clip=false,trim=0cm 0cm 0cm .7cm]{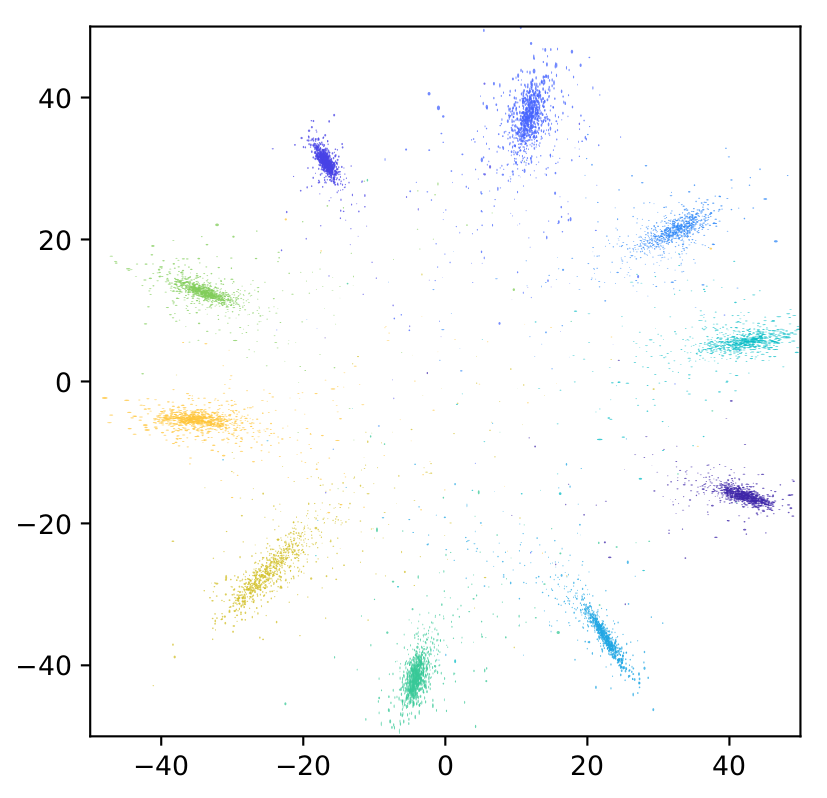} \\\\
			\end{tabular}
		\end{small}
	}	
	\caption{
		Posterior Gaussian distributions of 5000 test images from \textsc{MNIST} in a 2-dimensional latent space after training with $\beta=10^{-1},10^{-3},10^{-4}$ and $M=1$ with the \emph{oneshot} strategy and the \emph{sequential} strategy with $k=5,10$ encoder update steps per decoder update. Color corresponds to the class label. Boxes in each row have the same dimension.
		}
	\label{fig:2dreps}
\end{figure}

\subsection{Classification robustness under distributional shift}\label{sec:robustclass}
We demonstrate that the VDB generalizes well across distributional shifts, i.e., when the train and test distributions are different. 
We use the \textsc{MNIST-C} \cite{mnistc} and \textsc{CIFAR-10-C} \cite{augmixICLR2019} benchmarks to evaluate the classifier's robustness to common corruptions when trained with the VDB objective. 
These datasets are constructed by applying 15 common corruptions (at 5 different severity levels) to the \textsc{MNIST} and \textsc{CIFAR-10} test sets.
The corruptions comprise of four different categories, namely, noise, blur, weather, and digital.

To evaluate a classifier's robustness to common corruptions, we use the metrics proposed in \cite{augmixICLR2019}. Given a classifier $f$ and a baseline classifier $b$, the corruption error (CE) on a certain corruption type $c$ is computed as the ratio ${E_{c}^f}/{E_{c}^b}$, where $E_{c}^f$ and $E_{c}^b$ are resp. the errors of $f$ and $b$ on $c$, aggregated over five different severity levels.  
A more nuanced measure is the relative CE that measures corruption robustness relative to the \emph{Clean error}, the usual classification error on the uncorrupted test set. 
The relative CE is computed as the ratio ${(E_{c}^f-E_{clean}^f)}/{(E_{c}^b-E_{clean}^b)}$, where $E_{clean}^f$ and $E_{clean}^b$ are resp. the clean errors of $f$ and $b$. 
Averaging the CE and relative CE across all 15 corruption types yields the \emph{mean CE (mCE)} and the \emph{Relative mCE} values. 

Results of computation of the robustness metrics for the \textsc{MNIST-C} and \textsc{CIFAR-10-C} datasets using different training strategies for different values of $\beta$ are shown, resp., in Tables~\ref{tab:robusttestacc_mnist} and ~\ref{tab:robusttestacc_cifar}. The statistics are averages over $4$ independent runs.
The keys ``oneshot/M1'' and ``oneshot/M6'' refer to a \emph{oneshot} training strategy with resp., $M=1$ and $M=6$ encoder samples used for evaluating the training objective~\eqref{eq:emp-VDB-obj}. 
The keys ``seq:k:10/M1'' and ``seq:k:10/M6'' refer to a \emph{sequential} training strategy with resp., $M=1$ and $M=6$ encoder samples used for evaluating the training objective~\eqref{eq:emp-VDB-obj}, and with $k=10$ encoder update steps per decoder update.
We choose the baseline classifier as the VIB model (``oneshot/M1'').

For \textsc{MNIST-C}, we used the same encoder with a 256-dimensional representation as before.
We trained the VDB for 200 epochs using the Adam optimizer with a fixed learning rate of $10^{-4}$. 
For \textsc{CIFAR-10-C}, we used the 20-layer residual network ``ResNet20'' from \cite{resnet20} for the encoder with a 20-dimensional Gaussian representation and a softmax layer for the decoder. 
\begin{table}[!t]
	\centering 
	\scalebox{1}{ 
		\begin{threeparttable}
			\captionsetup{justification=raggedright} 
			\caption{\emph{Clean Error}, \emph{mCE}, and \emph{Relative mCE} values for the \textsc{MNIST-C} dataset using a MLP of size 784-1024-1024-512 trained using various strategies for different values of $\beta$. Lower values are better.
			} 
			\label{tab:robusttestacc_mnist}
			\small 
			\begin{tabular}{clcccc} 
				\toprule
				\multicolumn{1}{c}{$\beta$} & \multicolumn{1}{c}{Train strategy} & \multicolumn{1}{c}{\emph{Clean Error}} & \multicolumn{1}{c}{\emph{mCE}} & \multicolumn{1}{c}{\emph{Relative mCE}}\\
				\midrule
				\num{e-03} & oneshot/M1   &   1.53$\pm$0.10      & 100.00 &   100.00 \\
				           & seq/k:10/M1  &   1.42$\pm$0.13      &  93.88 &    93.56 \\
				           & oneshot/M6   &   1.38$\pm$0.04     &   86.84 &    84.46 \\
				           & seq/k:10/M6  &   1.39$\pm$0.04      &  85.95 &    \textbf{83.39} \\
				\midrule
				\num{e-04} & oneshot/M1   & 1.46$\pm$0.11       & 100.00 &   100.00 \\
				           & seq/k:10/M1  & 1.39$\pm$0.08       &  89.64 &    87.80 \\
						   & oneshot/M6   & 1.30$\pm$0.08        & 91.13 &   90.95 \\
						   & seq/k:10/M6  & 1.32$\pm$0.07        &  88.25 &    \textbf{86.85} \\
				\midrule
				\num{e-05} & oneshot/M1   &   1.63$\pm$0.44      & 100.00 &   100.00 \\
				           & seq/k:10/M1  &   1.32$\pm$0.11      &  84.64 &    \textbf{85.98} \\
				           & oneshot/M6   &   1.29$\pm$0.02      &  84.51 &    87.84 \\
				           & seq/k:10/M6  &   1.30$\pm$0.10      &  84.85 &    88.28 \\
				\bottomrule
			\end{tabular}
		\end{threeparttable}
	}		
\end{table} 

\begin{table}[t!]
	\centering 
	\scalebox{1}{ 
		\begin{threeparttable}
			\captionsetup{justification=raggedright} 
			\caption{\emph{Clean Error}, \emph{mCE}, and \emph{Relative mCE} values for the \textsc{CIFAR-10-C} dataset using the ResNet20 
			network \cite{resnet20}
			trained using various strategies for different values of $\beta$. Lower values are better.
			} 
			\label{tab:robusttestacc_cifar}
			\small 
			\begin{tabular}{clcccc} 
				\toprule
				\multicolumn{1}{c}{$\beta$} & \multicolumn{1}{c}{Train strategy} & \multicolumn{1}{c}{\emph{Clean Error}} & \multicolumn{1}{c}{\emph{mCE}} & \multicolumn{1}{c}{\emph{Relative mCE}}\\
				\midrule
				\num{e-03} & oneshot/M1   &   19.23$\pm$1.74      & 100.00 &   100.00 \\
				           & seq/k:10/M1 & 19.75$\pm$0.51        &  98.85 &    98.43 \\
				           & oneshot/M6    & 20.08$\pm$1.04        &  {96.91} &    \textbf{95.95} \\
				           & seq/k:10/M6 & 18.20$\pm$1.48        &  97.46 &    97.82 \\
				\midrule
				\num{e-04} & oneshot/M1    & 20.69$\pm$1.79        & 100.00 &   100.00 \\
				           & seq/k:10/M1 & 20.41$\pm$1.39        &  99.50 &    99.55 \\
						   & oneshot/M6    & 19.55$\pm$0.44        & 100.51 &   101.14 \\
						   & seq/k:10/M6 & 20.88$\pm$0.39        &  98.48 &    \textbf{98.12} \\
				\midrule
				\num{e-05} & oneshot/M1    &   20.42$\pm$1.05      & 100.00 &   100.00 \\
						   & seq/k:10/M1 & 18.78$\pm$0.52        &  97.22 &    \textbf{97.76} \\
						   &	 oneshot/M6    & 19.10$\pm$1.02        &  97.58 &    98.07 \\
						   & seq/k:10/M6 & 19.48$\pm$1.12        &  97.68 &    98.01 \\
				\bottomrule
			\end{tabular}
		\end{threeparttable}
	}		
\end{table} 

We see that for $M=1$, the sequential training strategy achieves lower \emph{mCE} and \emph{Relative mCE} values than the VIB across different values of $\beta$ for both the \textsc{MNIST-C} and \textsc{CIFAR-10-C} datasets. 
Recall that the objective of using the sequential strategy is to better approximate the deficiency which involves an optimization over the encoder. The advantage of sampling the encoder multiple times ($M>1$) for each input sample during training is also evident for both the oneshot and sequential strategy. The improved robustness in this case might be explained by way of data augmentation in latent space.

\section{Discussion}
We have formulated a bottleneck method based on channel deficiencies. 
The deficiency of a decoder w.r.t. a given channel quantifies how well an input randomization of the decoder (by a stochastic encoder) can be used to approximate the given channel. 
The DB has a natural variational formulation which recovers the VIB in the limit of a single sample of the encoder output.  
Moreover, the resulting variational objective can be implemented as an easy modification of the VIB objective with little to no computational overhead.  
Experiments show that the VDB can provide advantages in terms of minimality while retaining the same discriminative capacity as the VIB. 
We demonstrated that training with the VDB improves out-of-distribution robustness over the VIB on two benchmark datasets, the \textsc{MNIST-C} and the \textsc{CIFAR-10-C}.

An unsupervised version of the VDB objective \eqref{eq:emp-VDB-obj} (for $\beta=1$) shares some superficial similarities with the Importance Weighted Autoencoder (IWAE) \cite{iwae} which also features a sum inside a logarithm. Note, however, that the IWAE objective cannot be decomposed for~$M > 1$.
This implies that we cannot trade-off reconstruction fidelity for learning meaningful representations by incorporating bottleneck constraints. 
As $M$ increases, while the posterior approximation gets better, the magnitude of the gradient w.r.t. the encoder parameters also decays to zero \cite{rainforthtighter}. 
This potentially limits the IWAE's ability to learn useful representations. 
It is plausible that a similar bias-variance trade-off occurs with the VDB objective for high values of $M$. This is worth investigating.

\appendix\label{app:proofs}

\begin{proof} [Proof of Proposition~\ref{prop:constructdecomp}]
	Nonnegativity of $\TUI$, $\TSI$ and $\TCI$ follows from \eqref{subeq:delbounds} and the fact that $$0\le \delta^Z \le \min\{I(Y;Z),I(Y;Z|X)\}$$ by assumption.
	
	If $I(Y;Z)-\delta^Z \le I(Y;X)-\delta^X$, or equivalently, by the chain rule of mutual information \cite{CoverThomas91:Elements_of_Information_Theory}, if $I(Y;Z|X)-\delta^Z\le I(Y;X|Z)-\delta^X$, then we have  
	\begin{align*}
	\TUI(Y;X\backslash Z) &= \delta^Z+I(Y;X)-I(Y;Z),\\
	\TUI(Y;Z\backslash X) &= \delta^Z,\\
	\TSI(Y;X,Z) &=I(Y;Z)-\delta^Z,\\
	\TCI(Y;X,Z) &=I(Y;Z|X)-\delta^Z.
	\end{align*}
	Clearly, the functions $\TUI$, $\TSI$ and $\TCI$ satisfy~\eqref{eq:MIdec1}-\eqref{eq:MIdec3}, and the proposition is proved.
	The proof for the case when $I(Y;Z)-\delta^Z \ge I(Y;X)-\delta^X$ is similar. 
\end{proof}

\begin{proof} [Proof of Proposition~\ref{lem:positivity_gdefi}]
	It suffices to show that the $\delta^{\pi}(d,\kappa)$ satisfies the bound~\eqref{subeq:delbounds}.
	
	Let~$e^{\ast}\in\mathsf{M}(\Xcal;\Zcal)$ achieve the minimum in~\eqref{def:deficiency}.  
	By definition, $P_{Y|X}=\kappa$, $P_{Y|Z}=d$ and $P_X=\pi$. 
	We have
	\begin{multline*}
	I(Y;X|Z)=\sum_x P(x)\sum_z P(z|x)D(P(y|x,z)||P(y|z))\\
	\ge \sum_x P(x)D\left(\sum_z P(z|x)P(y|x,z)||\sum_z P(z|x)P(y|z)\right)\\
	=D(P_{Y|X}\|P_{Y|Z}\circ P_{Z|X}|P_X)
	\ge D(\kappa\|d\,\circ\, e^{\ast}|\pi)
	=\delta^{\pi}(d,\kappa),
	\end{multline*}
	where the first inequality follows from the convexity of the KL divergence and the second inequality follows from the definition of~$e^{\ast}$.
	
	$\delta^{\pi}(d,{\kappa})\le I(Y;X)$ since
	\begin{align*}
	&I(Y;X)-\delta^{\pi}(d,{\kappa})\\&=D(P_{Y|X}\|P_Y|P_X)-D(P_{Y|X}\|P_{Y|Z} \circ e^{\ast}_{Z|X} |P_X)\\
	&\ge D(P_{Y|X}\|P_Y|P_X)-D(P_{Y|X}\|P_{Y|Z} \circ P_{Z} |P_X)=0.
	\end{align*} 
\end{proof}

\section*{Acknowledgment}

PB thanks Pattarawat Chormai for many helpful inputs.

\bibliographystyle{IEEEtran}
\bibliography{IEEEabrv,general}

\end{document}